\documentclass[11pt]{article}
\usepackage{times}
\usepackage{fullpage}

\usepackage[utf8]{inputenc} 
\usepackage[T1]{fontenc}    
\usepackage{hyperref}
\usepackage{url}

\usepackage{natbib}
\usepackage{booktabs}       
\usepackage{amsfonts}       
\usepackage{nicefrac}       
\usepackage{subcaption}    
\usepackage{xcolor}         

\usepackage{amsmath,amssymb,bbm,amsthm}
\usepackage{thm-restate,color,xcolor,xspace}
\usepackage{hyperref,cleveref}
\usepackage{algorithm,algorithmic}
\usepackage{graphicx}

\newtheorem{theorem}{Theorem}
\newtheorem{lemma}[theorem]{Lemma}

\newtheorem{corollary}[theorem]{Corollary}

\newtheorem{remark}[theorem]{Remark}

\numberwithin{theorem}{section}
\numberwithin{lemma}{section}
\numberwithin{definition}{section}
\numberwithin{corollary}{section}
\numberwithin{observation}{section}
\numberwithin{remark}{section}

\newcommand{\eps}{\varepsilon}

\newcommand{\norm}[1]{\left|\left| #1 \right|\right|}

\DeclareMathOperator*{\argmin}{arg\,min}

\newcommand{\fh}[1]{#1^{(1)}} 
\newcommand{\sh}[1]{#1^{(2)}} 

\title{SVD Provably Denoises Nearest Neighbor Data}

\author{
  Ravindran Kannan$^1$, Kijun Shin$^2$, David Woodruff$^2$ \\
  $^1$Simons Institute, UC Berkeley, \texttt{kannan100@gmail.com} \\
  $^2$Carnegie Mellon University, \texttt{\{kijunshi, dwoodruf\}@andrew.cmu.edu}
}

\begin{document}

\maketitle

\begin{abstract}
We study the Nearest Neighbor Search (NNS) problem in a high-dimensional setting where data originates from a low-dimensional subspace and is corrupted by Gaussian noise. Specifically, we consider a semi-random model where $n$ points from an unknown $k$-dimensional subspace of $\mathbb{R}^d$ ($k \ll d$) are perturbed by zero-mean $d$-dimensional Gaussian noise with variance $\sigma^2$ on each coordinate. We assume that the second-nearest neighbor is at least a factor $(1+\varepsilon)$ farther from the query than the nearest neighbor. We assume we are given only the noisy data and are required to find NN of the uncorrupted data. We prove the following results:
\begin{enumerate}
    \item For $\sigma \in O(1/k^{1/4})$, we show that simply performing SVD denoises the data; namely, we provably recover accurate NN of uncorrupted data (Theorem \ref{thm:main}).
    \item For $\sigma \gg 1/k^{1/4}$, NN in uncorrupted data is not even {\bf identifiable} from the noisy data in general. This is a matching lower bound on $\sigma$ with the above result, demonstrating the necessity of this threshold for NNS (Lemma \ref{lemma:k-impossibility}).
    \item For $\sigma \gg 1/\sqrt k$, the noise magnitude ($\sigma \sqrt{d}$) significantly exceeds the inter-point distances in the unperturbed data. Moreover, NN in noisy data is different from NN in the uncorrupted data in general.
\end{enumerate}
Note that (1) and (3) together imply SVD identifies correct NN in uncorrupted data even in a regime
where it is different from NN in noisy data. This was not the case in existing literature (see e.g. \citep{abdullah2014spectral}). Another comparison with \citep{abdullah2014spectral} is that it requires $\sigma$ to be at least an inverse polynomial in the ambient dimension $d$. The proof of (1) above uses upper bounds on perturbations of singular spaces of matrices as well as concentration and spherical symmetry of Gaussians. We thus give theoretical justification for the performance of spectral methods in practice. We also provide empirical results on real datasets to corroborate our findings.
\end{abstract}

\section{Introduction}
The nearest neighbor problem is a fundamental task in various fields, including machine learning, data mining, and computer vision. It involves identifying the data point closest to a given query point within a dataset. While conceptually straightforward, the performance and reliability of nearest neighbor search (NNS) can suffer in the presence of noise, particularly in high-dimensional spaces. Real-world data is susceptible to noise, which can ruin the true underlying structure and lead to erroneous nearest neighbor identifications. This necessitates robust techniques that can reduce the impact of noise to ensure accurate and reliable NNS. In this paper, we analyze the NNS problem in a noisy high-dimensional setting. Specifically, we consider a semi-random model where data points from an unknown $k$-dimensional subspace of $\mathbb{R}^d$ ($k \ll d$) are perturbed by adding $d$-dimensional Gaussian noise $N_d (0, \sigma^2 I_d)$ to it.

A fundamental tool in high-dimensional computational geometry, often applied to the NNS problem, is the random projection method. The Johnson-Lindenstrauss Lemma \citep{johnson1984extensions} demonstrates that projecting data onto a uniformly random $k$-dimensional subspace of $\mathbb{R}^d$ approximately preserves distances between points, offering a computationally efficient way to reduce dimensionality. This approach has had a tremendous impact on algorithmic questions in high-dimensional geometry, leading to the development of algorithms for approximate NNS, such as Locality-Sensitive Hashing (LSH) \citep{indyk1998approximate}, which are widely used both theoretically and practically. All known variants of LSH for Euclidean space, including \citep{datar2004locality, andoni2006near, andoni2014beyond}, involve random projections.

However, it is natural to question whether performance can be improved by replacing "random" projections with "best" or data-aware projections. Practitioners often rely on techniques like Principal Component Analysis (PCA) and its variants for dimension reduction, leading to successful heuristics such as PCA trees \citep{mcnames2001fast, sproull1991refinements, verma2009spatial}, spectral hashing \citep{weiss2008spectral}, and semantic hashing \citep{salakhutdinov2009semantic}. These data-adaptive methods frequently outperform algorithms based on oblivious random projections in practice. Yet, unlike random projection methods, these adaptive approaches often lack rigorous correctness or performance guarantees. Bridging this gap between theoretical guarantees and empirical successes for data-aware projections is a significant open question in Massive Data Analysis, see, e.g., \citep{national2013frontiers}. For worst-case inputs, random projections are known to be theoretically optimal \citep{alon2003problems, jayram2013optimal}, making it challenging to theoretically justify data-aware improvements. This paper aims to provide a theoretical justification for this disparity by studying data-aware projections for the NNS problem.

To address this challenge, we study the semi-random setting proposed in \citep{abdullah2014spectral}. In this setting, a dataset $P$ of $n$ points in $\mathbb{R}^d$, and a query point $q$ are arbitrarily drawn from an unknown $k$-dimensional subspace (where $k \ll d$) and then perturbed by adding $d$-dimensional Gaussian noise $N_d (0, \sigma^2 I_d)$. The goal is to find the point $p \in P$ that is closest to $q$ in Euclidean distance (considering their unperturbed versions), based on noisy versions.

Our main contribution is a new Singular Value Decomposition (SVD) algorithm for solving the NNS recovery problem. This algorithm can tolerate substantially larger noise levels compared to previous approaches, such as those in \citep{abdullah2014spectral}. Specifically, we characterize the robustness of NNS under various noise levels. We identify several critical noise level thresholds below in the increasing order of noise level:

\begin{itemize}
    \item For $\sigma \gg 1 / \sqrt{d}$, the noise magnitude (with an expected magnitude of $\sigma\sqrt{d}$) can be substantially larger than the inter-point distances in the original data. Specifically, random Johnson-Lindenstrauss projections will preserve these noisy distances, effectively losing the underlying nearest neighbor structure of the uncorrupted data. Therefore, SVD would be preferred to random projection when $\sigma \gg 1 / \sqrt{d}$.
    \item For $\sigma \in O(1/d^{1/4})$, \citep{abdullah2014spectral} proved that the nearest neighbor in the perturbed data remains the same. Their algorithm tolerates a noise level of at most $\sigma = O(1/\sqrt{k}d^{1/4})$, which implies $\sigma$ must be at least an inverse polynomial in the ambient dimension $d$.
    \item For $\sigma \gg 1 / \sqrt{k}$, the nearest neighbor in the perturbed data can, with large probability, differ from the true nearest neighbor.
    \item For $\sigma \in O(1/k^{1/4})$, our algorithmic results (Theorem \ref{thm:main}) demonstrate that applying SVD to the perturbed data can effectively identify the true nearest neighbor in this regime. This represents a critical improvement over the previous work of \citep{abdullah2014spectral}, as our algorithm is effective for $\sigma \gg 1/\sqrt{k}$ where the NN in noisy data is different from the NN in uncorrupted data.
    \item For $\sigma \gg 1/k^{1/4}$, we show that it is information-theoretically impossible to identify the nearest neighbors from the noisy data. This result complements our algorithmic findings by providing matching lower bounds on the noise level $\sigma$, thereby demonstrating the necessity of the threshold $\sigma = O(1/k^{1/4})$ for NNS.
\end{itemize}

\subsection{High-level Overview}

In addition to improved noise tolerance, our algorithm offers simplicity, requiring only two SVD calls, unlike the iterated PCA approach in \citep{abdullah2014spectral}. We now discuss the high-level idea of our algorithm. We represent the input points as the first $n$ columns of a $d \times (n+1)$ matrix $B$, with the last column being the query point $q$. Similarly, we represent the Gaussian noise as a $d \times (n+1)$ matrix $C$ with i.i.d. entries drawn from $N(0,\sigma^2)$. Let $A=B+C$ denote the perturbed data set, which serves as the input to our algorithm. Our approach involves computing the SVD of $A$ and projecting $A$ onto its top $k$-dimensional subspace. A direct application of the SVD was not explored in earlier works to handle such high noise levels. The only earlier work we are aware of with related provable guarantees in a noisy model via the SVD is that on latent semantic indexing \citep{papadimitriou2000latent}, though \citep{papadimitriou2000latent} makes strong assumptions.

More specifically, we process the $j$ indices in two parts: first for $1 \le j \le \frac{n}{2}$, then for $\frac{n}{2}+1 \le j \le n$. Let $\fh{A}$ be a $d \times (\frac{n}{2}+1)$ matrix consisting of the first $n/2$ columns of $A$ and the query point (as column $n/2+1$). Similarly, let $\sh{A}$ be a $d \times (\frac{n}{2}+1)$ matrix formed by the last $n/2$ columns of $A$ and the query point. The query point is in both parts. This superscript notation, $(1)$ and $(2)$, is also extended to $B$ and $C$. Let $\fh{U}$ be the subspace spanned by the $k$ top singular vectors of the first $n/2$ columns of $\fh{A}$ (i.e., $\fh{A}[1,\frac{n}{2}]$). Similarly, $\sh{U}$ is the subspace spanned by the $k$ top singular vectors of the first $n/2$ columns of $\sh{A}$ (i.e., $\sh{A}[1,\frac{n}{2}]$). Since $\fh{A}$ and $\sh{A}$ are given, $\fh{U}$ and $\sh{U}$ can be computed. The point of splitting the data into 2 parts is that $P_{\sh{U}}$ and $\fh{A}$ are stochastically independent and this makes our probabilistic arguments simpler. It is not clear that this is necassary and we leave it as an open question as to whether the simpler algorithm without splitting provably works.

We denote the projection matrix onto a subspace $U \subseteq \mathbb{R}^d$ as $P_U$. The underlying idea is that projecting points (both data and query) onto the SVD subspace effectively extracts the latent subspace structure, which is sufficient to estimate distances, $\norm{p_i - q}$. Thus, the main algorithm proceeds as follows: to estimate all distances for the first $n/2$ points, we compute the minimum value of:
\[ \min_{1 \le j \le \frac{n}{2}} \norm{P_{\sh{U}} \left( \fh{A}_{\cdot,j} - \fh{A}_{\cdot,n/2+1} \right)}, \]
where $\fh{A}_{\cdot,j}$ denotes the $j$-th column of $\fh{A}$. With $x_j = \mathbf{e}_j - \mathbf{e}_{n/2+1}$, this expression simplifies to $\norm{P_{\sh{U}} \fh{A} x_j}$. Our claim is that, under a specific noise regime, $\norm{P_{\sh{U}} \fh{A} x_j}$ provides a $(1+\eps)$-approximation of $\norm{\fh{B} x_j} = \norm{p_i - q}$ for any $\eps > 0$. Subsequently, similar steps are performed for the second part of the data. The complete algorithm is then as follows:

\begin{algorithm}[H]
\caption{$(1+\eps)$-approximate NNS for the Semi-Random Model}
\begin{algorithmic}[1]
\REQUIRE{An ambient space $\mathbb{R}^d$ and a matrix $A \in \mathbb{R}^{d \times (n+1)}$ representing the perturbed point set.}
\ENSURE{Returns the index of a $(1+\eps)$-approximate nearest neighbor for the unperturbed data.}
\STATE{$\fh{A} \leftarrow$ matrix formed by columns $1$ to $n/2$ of $A$ and column $n+1$ of $A$.}
\STATE{$\sh{A} \leftarrow$ matrix formed by columns $n/2+1$ to $n$ of $A$ and column $n+1$ of $A$.}
\STATE{$\fh{U} \leftarrow$ the subspace spanned by the $k$ top singular vectors of $\fh{A}$.}
\STATE{$\sh{U} \leftarrow$ the subspace spanned by the $k$ top singular vectors of $\sh{A}$.}
\STATE{$j_1 \leftarrow \argmin_{1 \le j \le \frac{n}{2}} \norm{P_{\sh{U}} \fh{A} x_j}$.}
\STATE{$j_2 \leftarrow \argmin_{1 \le j \le \frac{n}{2}} \norm{P_{\fh{U}} \sh{A} x_j}$.}
\IF{$\norm{P_{\sh{U}} \fh{A} x_{j_1}} < \norm{P_{\fh{U}} \sh{A} x_{j_2}}$}
    \STATE{Return $j_1$.}
\ELSE
    \STATE{Return $j_2 + n/2$.}
\ENDIF
\end{algorithmic}
\label{alg:main}
\end{algorithm}

Below, we formalize the theoretical guarantee of Algorithm~\ref{alg:main}. Let $s_k(X)$ denote the $k$-th singular value of matrix $X$. If $\operatorname{rank}(X) < k$, then $s_k(X)$ is defined as $0$.

\begin{theorem}
    For the semi-random model described above, if the noise level $\sigma$ satisfies:
    \[ \sigma \le \min \left( \sqrt{\frac{\eps}{240}} \frac{\min\left( \norm{\fh{B} x_j}, \norm{\sh{B} x_j} \right)}{(k\ln{n})^{1/4}}, \frac{\eps \cdot \min(s_k(\fh{B}), s_k(\sh{B}))}{75\sqrt{n}}, \frac{\eps \cdot \min\left( \norm{\fh{B} x_j}, \norm{\sh{B} x_j} \right)}{36 \sqrt{\ln{n}}} \right), \]
    Algorithm \ref{alg:main} returns a $(1+\eps)$-approximate nearest point for $\eps > 0$ with probability at least $1-\frac{1}{n}$.
    \label{thm:main}
\end{theorem}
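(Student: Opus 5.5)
The plan is to show that, with probability at least $1-1/n$, for every $j$ in the first half the estimate $\norm{P_{\sh{U}} \fh{A} x_j}^2$ equals $\norm{\fh{B} x_j}^2 + 2\sigma^2$ up to a multiplicative $(1\pm O(\eps))$ error. The analogous statement for the second half follows by symmetry. Since the additive term $2\sigma^2$ is the same for every candidate, comparing the estimates across both halves gives a $(1+\eps)$-approximate nearest neighbor.

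Here $2\sigma^2$ is not a fitted quantity but the expected squared norm of the projected noise $P_{\sh{U}}\fh{C}x_j$: this vector lives in a $k$-dimensional space, and $\norm{x_j}^2=2$, so its expected squared norm is $2k\sigma^2$. If the constant is really $k\sigma^2$ and not negligible, the subtraction argument has to absorb it. The first condition on $\sigma$ is designed exactly for this: $\sigma^2 \sqrt{k\ln n} \lesssim \eps \norm{Bx_j}^2$ is the fluctuation scale of a chi-square variable with $k$ degrees of freedom, not its mean. The constant offset cancels in the argmin as long as it is common to all candidates.

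One caveat: the query column appears in both halves, so $P_{\sh{U}}$ is not fully independent of $\fh{C}_{\cdot,n/2+1}$. Per the overview, I would take $\sh{U}$ to be computed from the first $n/2$ columns of $\sh{A}$ only, which excludes the query. Then $P:=P_{\sh{U}}$ is independent of $\fh{C}$.

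Decompose
\[ P\fh{A}x_j = P\fh{B}x_j + P\fh{C}x_j . \]
This gives
\[ \norm{P\fh{A}x_j}^2 = \norm{P\fh{B}x_j}^2 + 2\langle P\fh{B}x_j, P\fh{C}x_j\rangle + \norm{P\fh{C}x_j}^2 . \]
I bound the three terms as follows.

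\textbf{Signal term.} Let $V$ be the true $k$-dimensional latent subspace. A Wedin/Davis–Kahan $\sin\Theta$ bound gives $\norm{(I-P)P_V}\le 2\norm{\sh{C}}/s_k(\sh{B})$. A standard Gaussian matrix bound gives $\norm{\sh{C}}\le \sigma(\sqrt d+\sqrt n+O(\sqrt{\ln n}))$. Hence
\[ \norm{P\fh{B}x_j}\ge (1-\eps/10)\norm{\fh{B}x_j} . \]
This needs $\sigma\sqrt{\max(n,d)} \lesssim \eps s_k$. The stated condition has only $\sqrt n$, so I expect to need $d\le n$ or a sharper argument. This perturbation step is where I expect the main obstacle, since the constant $75$ and the $\sqrt n$ suggest the authors use a variant of Wedin that controls $\norm{\sh{C}}$ restricted suitably. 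I would try first to apply Wedin with the operator norm on the $n/2\times n/2$ Gram side.

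\textbf{Cross term.} Conditioned on $P$, the vector $\fh{C}x_j$ is Gaussian with covariance $2\sigma^2 I$, independent of $P$. So $\langle P\fh{B}x_j, \fh{C}x_j\rangle$ is $N\bigl(0, 2\sigma^2\norm{P\fh{B}x_j}^2\bigr)$. With a union bound over the $n$ indices, it is at most $O\bigl(\sigma\sqrt{\ln n}\,\norm{\fh{B}x_j}\bigr)$. The third condition, $\sigma\le \eps\norm{Bx_j}/(36\sqrt{\ln n})$, makes this at most $(\eps/10)\norm{\fh{B}x_j}^2$.

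\textbf{Noise term.} By spherical symmetry, $\norm{P\fh{C}x_j}^2/(2\sigma^2)$ is $\chi^2_k$. Laurent–Massart gives
\[ \norm{P\fh{C}x_j}^2 = 2k\sigma^2 \pm O\bigl(\sigma^2(\sqrt{k\ln n}+\ln n)\bigr) . \]
The first condition makes the fluctuation at most $(\eps/10)\norm{\fh{B}x_j}^2$. The common offset $2k\sigma^2$ is identical for all $j$ in both halves, since both $P_{\fh{U}}$ and $P_{\sh{U}}$ have rank $k$.

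\textbf{Conclusion.} Union-bound all failure events, each with probability $1/n^2$, to get probability $1-1/n$. Then for all $j$,
\[ \bigl|\norm{PAx_j}^2 - 2k\sigma^2 - \norm{Bx_j}^2\bigr| \le (\eps/3)\norm{Bx_j}^2 . \]
The returned index $j^*$ satisfies $\norm{PAx_{j^*}}^2\le \norm{PAx_{j_{\rm opt}}}^2$. Cancelling the offset gives
\[ (1-\eps/3)\norm{Bx_{j^*}}^2\le(1+\eps/3)\norm{Bx_{j_{\rm opt}}}^2 , \]
which yields the $(1+\eps)$ guarantee. If the algorithm is meant to compare raw norms, the same offset cancellation justifies comparing them directly.
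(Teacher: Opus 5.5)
Your proposal is correct and is essentially the paper's proof. It uses the same ingredients: splitting for independence, Wedin plus the Gaussian spectral-norm bound, a Gaussian tail for the cross term, Laurent--Massart for the projected noise, a union bound, and cancellation of the common $2k\sigma^2$ offset. Your three-term expansion just regroups the paper's five terms, since $\norm{(P_{\sh{U}}-P_V)\fh{B}x_j}^2 + 2x_j^T{\fh{B}}^T(P_{\sh{U}}-P_V)\fh{B}x_j = -\norm{(I-P_{\sh{U}})\fh{B}x_j}^2$.

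On your flagged obstacle, the paper uses no sharper Wedin variant. It simply assumes $n\ge d$ (and $k\ge\ln n$) as in Lemma~\ref{lemma:norm-estimation}, and these hypotheses are missing from the theorem statement. Also, you build $\sh{U}$ from the $n/2$ data columns only, which, unlike the paper's pseudocode, genuinely makes it independent of $\fh{C}$. The price is that the gap in your Wedin step is $s_k$ of those columns, which can be smaller than the $s_k(\sh{B})$ appearing in the hypothesis.
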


\begin{remark}[Interpretation of Bounds]
Theorem 1.1 highlights two distinct scaling requirements for recovery:

1. \textbf{Intrinsic Dimension ($k$):} The term $O(1/k^{1/4})$ reflects the geometric complexity of the subspace. This threshold aligns with the information-theoretic limits of preserving nearest neighbor structures (see Lemma \ref{lemma:k-impossibility}).

2. \textbf{Signal Strength ($s_k(B)$):} The term proportional to $s_k(B)/\sqrt{n}$ bounds the spectral gap. By Wedin's Theorem, this ratio ensures that the principal angles between the true subspace $V$ and the empirical subspace $U^{(2)}$ remain small. If $s_k(B)$ were too small, the signal directions would be indistinguishable from noise directions, making subspace recovery impossible regardless of the algorithm used. We discuss this factor in Section \ref{subsection:discussion}.
\end{remark}

This highlights the power of SVD in extracting low-dimensional structure from noisy high-dimensional observations. We also explored its impact through our empirical results. Our empirical results further validate our theoretical findings, demonstrating the practical benefits of our SVD-based approach and its superior performance compared to na\"ive algorithms, particularly in terms of noise dependence on the intrinsic subspace dimension $k$ and sensitivity to the $k$-th minimum singular value of the data $s_k(B)$.

{\bf Organization:} Section \ref{section:algorithmic-results} details our algorithmic approach, including the problem setup and the SVD-based algorithm, along with its analysis and discussion. Section \ref{section:lower-bounds} provides theoretical lower bounds, demonstrating the optimality of our proposed noise thresholds. Section \ref{section:experiment} presents empirical results that validate our theoretical findings and illustrate the practical benefits of our approach.

\section{Algorithmic Results}
\label{section:algorithmic-results}

\subsection{The Model and Problem}

We employ a semi-random data model that assumes the original data consists of $n$ arbitrary (not random) points from a $k$-dimensional subspace $V$ of $\mathbb{R}^d$. We also assume the query point lie in $V$. The original data is latent (hidden), and so is $V$. The input is noisy data, obtained by adding Gaussian noise to the original data. Such a semi-random model has been widely used \citep{abdullah2014spectral, azar2001}.

$B$ is a $d \times (n+1)$ matrix where the first $n$ columns represent the latent data points, and the last column represents the latent query. $C$ is a $d \times (n+1)$ matrix representing the perturbations to the $n$ latent data points and the query. We assume the entries of $C$ are i.i.d. random variables, each drawn from $N(0,\sigma^2)$. The observed data $A=B+C$ constitutes the input to the problem. For notational convenience, let $x'_j=\mathbf{e}_{j}-\mathbf{e}_{n+1}$ such that $B_{\cdot, j}-B_{\cdot ,n+1}=Bx'_j.$ The objective is to output a $(1+\eps)$-approximate nearest neighbor for $\eps > 0$. Specifically, the goal is to find an index $j\in \{1,2,\ldots, n\}$ satisfying 
$||Bx'_j|| \le (1+\eps) \cdot \min_{1\le i\le n} ||Bx'_i||.$

\subsection{Analysis}

We are now ready to start the proof of Theorem \ref{thm:main}. We said that $\norm {P_{U^{(2)}}A^{(1)}x_j}$ is a good approximation to $\norm{B^{(1)}x_j}$. Below, Lemma \ref{lemma:norm-estimation} quantifies how well the projected noisy distances approximate the true latent distances, plus an expected noise term ($2k\sigma^2$). Hence, we can infer that if $j$ satisfies:
$||P_{U^{(2)}}A^{(1)}x_j||=\min_{1 \le i \le n/2}||P_{U^{(2)}}A^{(1)}x_i||
,$ then,  $j$ approximately minimizes $||\fh{B}x_j||$ over $||\fh{B}x_i||$ for $1 \le i \le n/2$ within error at most the right hand side of (\ref{178}).  

\begin{lemma}
    Assume $\fh{B}$ has rank $k$, $n \ge d$, and $k \ge \ln{n}$. Then, for each $1 \le j \le n/2$, the following holds with at least $1-\frac{1}{n^2}$ probability:
    \begin{align}\label{178}
        &\left| \norm{ P_{\sh{U}} \fh{A} x_j }^2 - 2k \sigma^2 - \norm{\fh{B} x_j}^2 \right| \notag \\
        &\le \frac{100\sigma^2 n}{s_k^2(\sh{B})} \norm{\fh{B} x_j}^2 + \frac{40\sigma \sqrt{n}}{s_k(\sh{B})} \norm{\fh{B} x_j}^2 + 20\sigma \norm{\fh{B}x_j} \sqrt{\ln{n}} + 40\sigma^2 \sqrt{k\ln{n}}.
    \end{align}
    Similarly, assuming $\sh{B}$ has rank $k$, $n \ge d$, and $k \ge \ln{n}$, then for each $1 \le j \le n/2$, the following holds
    with at least $1-\frac{1}{n^2}$ probability:
    \begin{align}
        &\left| \norm{ P_{\fh{U}} \sh{A} x_j }^2 - 2k \sigma^2 - \norm{\sh{B} x_j}^2 \right| \notag \\
        &\le \frac{100\sigma^2 n}{s_k^2(\fh{B})} \norm{\sh{B} x_j}^2 + \frac{40\sigma \sqrt{n}}{s_k(\fh{B})} \norm{\sh{B} x_j}^2 + 20\sigma \norm{\sh{B}x_j} \sqrt{\ln{n}} + 40\sigma^2 \sqrt{k\ln{n}} .\notag
    \end{align}

    \label{lemma:norm-estimation}
\end{lemma}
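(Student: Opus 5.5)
Write $P = P_{\sh{U}}$, $b = \fh{B}x_j$ and $c = \fh{C}x_j$, so that $\fh{A}x_j = b + c$. The key structural fact is that $\sh{U}$ is computed from columns that do not contain the first half of the data. I will condition on $\sh{A}$ and treat $P$ as a fixed rank-$k$ projection that is independent of the first-half noise columns. The query column is shared between the two halves, so I would handle its noise column either by conditioning on it as well, or by noting that $\sh{U}$ is computed from the first $n/2$ columns of $\sh{A}$ only, as in the high-level overview. The vector $c$ is the difference of two independent $N(0,\sigma^2 I_d)$ vectors, so $c \sim N(0, 2\sigma^2 I_d)$. Expanding the square gives
\[ \norm{P(b+c)}^2 = \norm{Pb}^2 + 2\langle Pb, c\rangle + \norm{Pc}^2 . \]
I will bound the three terms separately, each with failure probability $O(1/n^3)$, and then take a union bound.

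\emph{Noise-only term.} By spherical symmetry, $Pc$ is a $2\sigma^2$-scaled Gaussian on a $k$-dimensional space, so $\norm{Pc}^2/(2\sigma^2) \sim \chi^2_k$. Laurent--Massart concentration gives $|\norm{Pc}^2 - 2k\sigma^2| \le 2\sigma^2\bigl(2\sqrt{k t}+2t\bigr)$ with probability $1-2e^{-t}$. Taking $t = 3\ln n$ and using $k \ge \ln n$ to absorb $t$ into $\sqrt{kt}$ gives an error of at most $40\sigma^2\sqrt{k\ln n}$.

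\emph{Cross term.} Given $P$, the quantity $\langle Pb, c\rangle$ is $N(0, 2\sigma^2\norm{Pb}^2)$. Since $\norm{Pb} \le \norm{b}$, a Gaussian tail bound gives $|2\langle Pb,c\rangle| \le 20\sigma\norm{b}\sqrt{\ln n}$ with probability $1-1/n^3$.

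\emph{Signal term.} Because $b \in V$, we have $\norm{b}^2 - \norm{Pb}^2 = \norm{(I-P)b}^2 \le \sin^2\theta\,\norm{b}^2$, where $\theta$ is the largest principal angle between $V$ and $\sh{U}$. Wedin's theorem, applied to $\sh{A}[1,n/2] = \sh{B}[1,n/2] + \sh{C}[1,n/2]$ with $\sh{B}$ of rank exactly $k$, gives $\sin\theta \le \norm{\sh{C}}/\bigl(s_k(\sh{B}) - \norm{\sh{C}}\bigr)$. The standard Gaussian operator-norm bound $\norm{\sh{C}} \le \sigma(\sqrt d + \sqrt{n/2} + \sqrt{6\ln n})$, combined with $n \ge d$, yields $\norm{\sh{C}} \le c_0\sigma\sqrt n$ with probability $1-1/n^3$.

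Two cases then cover the signal term. If $c_0\sigma\sqrt n \le s_k(\sh{B})/2$, then $\sin\theta \le 2c_0\sigma\sqrt n/s_k(\sh{B})$. This gives an error of order $\sigma^2 n\norm{b}^2/s_k^2(\sh{B})$, which fits within the first two terms of the bound. If instead $c_0\sigma\sqrt n > s_k(\sh{B})/2$, the trivial bound $\norm{(I-P)b}^2 \le \norm{b}^2 \le \frac{2c_0\sigma\sqrt n}{s_k(\sh{B})}\norm{b}^2$ is already dominated by the $40\sigma\sqrt n/s_k$ term. This case split is the reason the statement carries both a quadratic term and a linear term in $\sigma\sqrt n/s_k$.

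\emph{Obstacles.} I expect the main difficulty to be making the independence argument rigorous, especially the shared query column, and making the constants $100$ and $40$ in the Wedin step come out correctly. Once independence is established, the rest consists of standard concentration bounds and bookkeeping. The second half of the statement follows by the symmetric argument with the roles of $(1)$ and $(2)$ exchanged.
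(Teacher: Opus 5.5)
Your argument is essentially the paper's. It uses sample splitting for independence, Laurent--Massart for $\norm{P_{\sh{U}}\fh{C}x_j}^2$, a Gaussian tail bound for the cross term, and Wedin plus the Gaussian operator-norm bound for the subspace error.

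The one real difference is the signal term. The paper writes $P_{\sh{U}}\fh{A}x_j=b+(P_{\sh{U}}-P_V)b+P_{\sh{U}}c$ and bounds $\norm{(P_{\sh{U}}-P_V)b}^2$ and $2b^T(P_{\sh{U}}-P_V)b$ separately (Lemmas~\ref{lemma:second-term} and~\ref{lemma:fourth-term}). The second of these is the source of the linear $40\sigma\sqrt{n}/s_k(\sh{B})$ term. Your identity $\norm{Pb}^2-\norm{b}^2=-\norm{(I-P)b}^2$ shows that these two pieces sum exactly to a one-sided error of at most $\norm{P_{\sh{U}}-P_V}^2\norm{b}^2$. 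Lemma~\ref{lemma:perturb-spectral-bound} gives $\norm{P_{\sh{U}}-P_V}\le 2\norm{\sh{C}}/s_k(\sh{B})$ unconditionally, and the paper turns this into $8\sigma\sqrt{n}/s_k(\sh{B})$. So that error is at most $64\sigma^2 n\norm{b}^2/s_k^2(\sh{B})$. You therefore need neither the case split nor the linear term. Your explanation of why the statement carries both terms is not the actual reason, but this is harmless since you only need an upper bound.

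The point that needs care is the one you flagged, the shared query column, and only one of your two remedies works.

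\emph{Conditioning on the query noise $\delta_q$ does not work.} Given $\delta_q$, the vector $c=\delta_j-\delta_q$ is $N(-\delta_q,\sigma^2 I_d)$. So $\norm{Pc}^2/(2\sigma^2)$ is no longer $\chi^2_k$, and $\langle Pb,c\rangle$ is no longer centered. Moreover, if $\sh{U}$ is computed from a matrix containing the column $q+\delta_q$, then $P$ itself depends on $\delta_q$. Quantities such as $\norm{P\delta_q}^2$ and $\langle Pb,\delta_q\rangle$ would then need a separate argument, e.g.\ leave-one-out.

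\emph{Computing $\sh{U}$ from $\sh{A}[1,\frac{n}{2}]$ only is the clean fix.} Then $P$ is independent of all of $\fh{C}$, and $c\sim N(0,2\sigma^2 I_d)$ given $P$. But Wedin must then be applied to $\sh{B}[1,\frac{n}{2}]$. You need that submatrix to have rank $k$, and the denominators become $s_k(\sh{B}[1,\frac{n}{2}])$. Deleting a column can only decrease singular values, so this can be strictly smaller than $s_k(\sh{B})$. Your sentence applies Wedin to $\sh{A}[1,\frac{n}{2}]$ but concludes with $s_k(\sh{B})$, which conflates the two. What you actually prove is this slightly weaker version.

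The paper does not do better here. It bounds $\norm{\sh{C}}$ as a $d\times(\frac{n}{2}+1)$ matrix that includes the query column, and it uses $s_k(\sh{B})$. Yet it asserts that $P_{\sh{U}}$ is independent of $\fh{C}$ ``since it depends only on $\sh{C}$'', although $\sh{C}$ and $\fh{C}$ share that column. On this point your proposal is as complete as the paper's proof, and you were right to single it out.
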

\begin{proof}
    Without loss of generality, we prove only the first part. Since all data points and the query point $q$ lie in $V$, projecting $\fh{B}$ onto $V$ does not change it. Thus, $P_V \fh{B} = \fh{B}$. Therefore, the following holds:
    \[ P_{\sh{U}} \fh{A} = P_{\sh{U}} (\fh{B}+\fh{C}) = \fh{B} + (P_{\sh{U}}-P_V)\fh{B} + P_{\sh{U}} \fh{C}. \]

    We aim to bound each term in this expression. First, $\norm{P_{\sh{U}}-P_V}$ can be bounded in terms of the $k$-th singular value $s_k(\fh{B})$. Second, $P_{\sh{U}} \fh{C} x_j$ is a random Gaussian vector, as per the definition of the noise matrix $C$. Thus, in the lemma statement, terms containing $s_k(\fh{B})$ relate to the effect of the $(P_{\sh{U}}-P_V)\fh{B}x_j$ term, while the remaining terms are associated with the inner product of random Gaussian noise or the norm of the noise vector itself.
    
    Since $P_{U^{(2)}}$ is symmetric, we get:
    \begin{align}\label{121}
        &\norm{ P_{\sh{U}} \fh{A} x_j }^2 =
        \norm{\fh{B}x_j+(P_{\sh{U}}-P_V)\fh{B}x_j+P_{\sh{U}}\fh{C} x_j}^2\nonumber \\
        &= \norm{ \fh{B} x_j }^2 + \norm{ (P_{\sh{U}}-P_V)\fh{B} x_j }^2 + \norm{ P_{\sh{U}}\fh{C} x_j }^2 
        \nonumber \\
 & + 2 x_j^T {\fh{B}}^T (P_{\sh{U}}-P_V) \fh{B} x_j+ 2 x_j^T {\fh{B}}^T P_{\sh{U}} \fh{C} x_j 
        +2x_j^TC^{(1)T}P_{U^{(2)}}(P_{U^(2)}-P_V)B^{(1)}x_j.
    \end{align}
    Now, $P_{U^(2)}$ is idempotent: $P_{U^{(2)}}P_{U^{(2)}}=P_{U^{(2)}}$. Also since the columns of $B^{(1)}$ lie in $V$, we have $P_VB^{(1)}=B^{(1)}.$ Plugging these into the last term on the right hand side of (\ref{121}), we see that term is zero $(P_{\sh{U}}(P_{\sh{U}}-P_V\fh{B})=0)$.
    It turns out that each of the other terms can be bounded. By Lemma~\ref{lemma:second-term}, Lemma~\ref{lemma:third-term}, Lemma~\ref{lemma:fourth-term}, and Lemma~\ref{lemma:fifth-term} below, together with the union bound, the theorem is proved.
\end{proof}

Before proving the lemmas directly, we first show a bound on the spectral norm of the matrix $P_{\sh{U}}-P_V$. Recall that $V$ is the true underlying subspace containing the points and the query, while $\sh{U}$ is the subspace spanned by the columns of the perturbed matrix $A$. Thus, bounds on the spectral norm of the difference between these two projection matrices can be expressed in terms of the noise $\sigma$ as follows. For this, we use well-established results from Numerical Analysis, namely, the $\sin \Theta$ theorem by \citep{davis1970} and the corresponding theorem for singular subspaces due to \citep{wedin1972}, which is stated as Lemma~\ref{lemma:perturb-spectral-bound} below:

\begin{lemma}[{\cite[Theorem 19]{orourke2018}}]
    Let $B$ be a real $d \times n$ matrix with singular values $s_1 \ge \ldots \ge s_{\min(d, n)} \ge 0$ and corresponding singular vectors $v_1, \ldots, v_{\min(d, n)}$. Also, let $E$ be an $d \times n$ perturbation matrix. Let $s'_1 \ge \ldots \ge s'_{\min(d, n)} \ge 0$ denote the singular values of $B+E$ with corresponding singular vectors $v'_1, \ldots, v'_{\min(d, n)}$. Suppose the rank of $B$ is $r$. For $1 \le j \le r$, let $V_j$ and $V'_j$ be the subspaces spanned by $\{ v_1, \ldots, v_j \}$ and $\{ v'_1, \ldots, v'_j \}$. Then, if $V_j$ and $V'_j$ are both $j$ dimensional spaces, the following holds for the (principal) angle between two subspaces:
    \[ \sin \angle (V_j, V'_j) := \max_{v \in V_j, v \not= 0} \min_{v' \in V'_j, v' \not= 0} \sin \angle (v, v') = \norm{P_{V_j}-P_{V'_j}} \le \frac{2 \norm{E}}{s_j - s_{j+1}} \]
    where $s_{r+1} = 0$.
    \label{lemma:perturb-spectral-bound}
\end{lemma}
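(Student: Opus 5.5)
The statement is Wedin's $\sin\Theta$ theorem for singular subspaces, and we plan a self-contained proof via Weyl's inequality and a two-sided residual argument. Throughout, $V_j$ and $V'_j$ are spans of left singular vectors in $\mathbb{R}^d$.

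\textbf{Step 1: reduce to the case of a small perturbation.} Write $e:=\norm{E}$ and $\delta:=s_j-s_{j+1}$. The difference of two orthogonal projections always has spectral norm at most $1$. Hence if $e\ge \delta/2$, the right-hand side $2e/\delta$ is at least $1$ and there is nothing to prove. So assume $e<\delta/2$. By Weyl's inequality for singular values, $s'_j\ge s_j-e$, and therefore
\[ s'_j-s_{j+1}\ \ge\ \delta-e\ \ge\ \delta/2>0 . \]

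\textbf{Step 2: pass to a one-sided quantity and set up residual equations.} Let $Q=I-P_{V_j}$. Since $\dim V_j=\dim V'_j=j$, the standard CS-decomposition fact gives
\[ \norm{P_{V_j}-P_{V'_j}}=\norm{QP_{V'_j}}=\norm{U_2^TW'}, \]
where:
\begin{itemize}
\item $W'$ is the $d\times j$ matrix of top-$j$ left singular vectors of $B+E$;
\item $R'$ is the matrix of the matching right singular vectors;
\item $S'=\mathrm{diag}(s'_1,\dots,s'_j)$;
\item $B=U_1\Sigma_1R_1^T+U_2\Sigma_2R_2^T$ is the SVD of $B$ split after index $j$, so that $\norm{\Sigma_2}=s_{j+1}$.
\end{itemize}
We have $(B+E)R'=W'S'$ and $(B+E)^TW'=R'S'$. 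Multiplying the first identity on the left by $U_2^T$ and the second by $R_2^T$ gives
\[ U_2^TW'S'=\Sigma_2R_2^TR'+U_2^TER', \qquad R_2^TR'S'=\Sigma_2^TU_2^TW'+R_2^TE^TW'. \]

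\textbf{Step 3: close the coupled inequalities.} Set $a=\norm{U_2^TW'}$, $b=\norm{R_2^TR'}$ and $m=\max(a,b)$. Since the smallest singular value of $S'$ is $s'_j$, the two identities give
\[ a\,s'_j\le s_{j+1}b+e, \qquad b\,s'_j\le s_{j+1}a+e . \]
Both right-hand sides are at most $s_{j+1}m+e$, so $m\,s'_j\le s_{j+1}m+e$. Combining this with Step 1,
\[ a\ \le\ m\ \le\ \frac{e}{s'_j-s_{j+1}}\ \le\ \frac{e}{\delta-e}\ \le\ \frac{2e}{\delta}. \]
For the convention $s_{r+1}=0$ when $j=r$, note that then $\Sigma_2=0$ and the same computation applies verbatim.

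\textbf{Main obstacle.} The delicate point is the coupling in Step 3. A one-sided argument, for instance through $(B+E)(B+E)^T$, loses a factor of $s_1$. One must therefore use both the left and right residual equations simultaneously, so that the $\Sigma_2$ cross terms are absorbed by the gap. A second subtlety is the identity $\norm{P_{V_j}-P_{V'_j}}=\norm{(I-P_{V_j})P_{V'_j}}$, which genuinely needs the equal-dimension hypothesis. We will cite it rather than reprove it.
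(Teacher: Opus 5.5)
The paper gives no proof of this lemma; it imports it from O'Rourke--Vu--Wang. There, the bound $2\|E\|/(s_j-s_{j+1})$ is the standard consequence of Wedin's $\sin\Theta$ theorem (whose gap is the mixed quantity $s'_j-s_{j+1}$) combined with Weyl's inequality. The regime $\|E\|\ge (s_j-s_{j+1})/2$ is trivial there because $\|P_{V_j}-P_{V'_j}\|\le 1$.

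Your proposal is correct and reconstructs exactly that argument in self-contained form. Step 1 is the Weyl reduction. Steps 2--3 are Wedin's proof: the two residual identities are coupled through $m\,s'_j\le s_{j+1}m+\|E\|$, so the $\Sigma_2$ cross terms are absorbed by the gap. This correctly relies on taking the full SVD, so that $U_2$ spans $V_j^\perp$, and on $s'_j>0$, so that $S'$ is invertible.

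Compared with the bare citation, your route makes several things explicit:
\begin{itemize}
\item the statement is purely deterministic in $E$;
\item it concerns left singular subspaces, which is what the paper needs, since $V$ and $U^{(2)}$ live in $\mathbb{R}^d$;
\item it does not depend on which singular vectors are chosen when singular values repeat;
\item it yields the slightly sharper intermediate bound $\|E\|/(s'_j-s_{j+1})\le \|E\|/(s_j-s_{j+1}-\|E\|)$.
\end{itemize}
The citation buys only brevity.

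One small addition for completeness: the statement also asserts the identity $\max_{v}\min_{v'}\sin\angle(v,v')=\|P_{V_j}-P_{V'_j}\|$, which you do not address. For a unit $v\in V_j$, the inner minimum equals $\|(I-P_{V'_j})v\|$, so the max--min equals $\|(I-P_{V'_j})P_{V_j}\|$. This equals $\|P_{V_j}-P_{V'_j}\|$ by the same equal-dimension fact you already cite in Step 2.
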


The bound in Lemma~\ref{lemma:perturb-spectral-bound} is in terms of the $\norm{E}$ term, which is the spectral norm of the perturbation matrix. To use this, we need an upper bound on $\norm{E}$. For this, we use a well-known result from Random Matrix Theory:

\begin{lemma}[{\cite[Equation 2.3]{rudelson2010}}]
    Suppose all entries of a $d\times n$ matrix $E$ are sampled from $N(0, \sigma^2)$ i.i.d. Then the following holds for any $t \ge 0$:
    \[ \Pr \left[ \norm{E} > \sigma(\sqrt{n}+\sqrt{d}) + t \right] \le 2 \exp \left( -\frac{t^2}{2\sigma^2} \right). \]
    \label{lemma:gaussian-matrix-norm}
\end{lemma}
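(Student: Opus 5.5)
Since this lemma is quoted from \citep{rudelson2010}, the simplest option is to cite it. For a self-contained argument, I would use the standard two-step route: first bound the expectation of $\norm{E}$ by a Gaussian comparison inequality, then show that $\norm{E}$ concentrates around its mean by Gaussian concentration for Lipschitz functions.

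By homogeneity, write $E=\sigma G$ where $G$ has i.i.d.\ $N(0,1)$ entries. It then suffices to show
\[ \Pr\left[\norm{G} > \sqrt{n}+\sqrt{d}+t'\right]\le 2\exp\left(-\frac{t'^2}{2}\right), \]
and substitute $t'=t/\sigma$.

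\textbf{Step 1 (expectation bound).} I would write $\norm{G}=\max_{u\in S^{d-1},\,v\in S^{n-1}} X_{u,v}$ with $X_{u,v}=u^TGv$. I would compare this process with $Y_{u,v}=\langle g,u\rangle+\langle h,v\rangle$, where $g\sim N_d(0,I)$ and $h\sim N_n(0,I)$ are independent. A direct computation gives
\[ \mathbb{E}(X_{u,v}-X_{u',v'})^2 = 2-2\langle u,u'\rangle\langle v,v'\rangle \]
and
\[ \mathbb{E}(Y_{u,v}-Y_{u',v'})^2 = 4-2\langle u,u'\rangle-2\langle v,v'\rangle. \]
Their difference is $2(1-\langle u,u'\rangle)(1-\langle v,v'\rangle)\ge 0$, so the increments of $X$ are dominated by those of $Y$. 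The Sudakov--Fernique inequality then yields
\[ \mathbb{E}\norm{G}\le \mathbb{E}\max Y = \mathbb{E}\norm{g}+\mathbb{E}\norm{h}. \]
By Jensen's inequality, $\mathbb{E}\norm{g}\le\sqrt{d}$ and $\mathbb{E}\norm{h}\le\sqrt{n}$, so $\mathbb{E}\norm{G}\le\sqrt{n}+\sqrt{d}$. The processes are indexed by compact sets, so one can pass to finite nets and take limits.

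\textbf{Step 2 (concentration).} View $G$ as a standard Gaussian vector in $\mathbb{R}^{dn}$ and set $f(G)=\norm{G}$. By the triangle inequality and the fact that the spectral norm is at most the Frobenius norm,
\[ |f(G)-f(G')|\le\norm{G-G'}\le\norm{G-G'}_F, \]
so $f$ is $1$-Lipschitz in the Euclidean metric. The Tsirelson--Ibragimov--Sudakov Gaussian concentration inequality gives
\[ \Pr[f(G)\ge \mathbb{E}f(G)+t']\le \exp(-t'^2/2). \]
Combining this with Step 1 gives the claim, even without the factor $2$, which is only slack. The constant $1/2$ in the exponent comes from concentration for $1$-Lipschitz functions of a standard Gaussian vector.

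\textbf{Main obstacle.} The substantive step is the comparison in Step 1: verifying the increment domination and invoking Sudakov--Fernique, or equivalently Gordon's inequality, correctly for supremum-type processes. Step 2 is routine once Gaussian concentration is taken as known.
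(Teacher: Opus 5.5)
Your proposal is correct; the paper gives no proof of this lemma and simply cites Rudelson--Vershynin, and your argument (Sudakov--Fernique comparison giving $\mathbb{E}\norm{G}\le\sqrt{n}+\sqrt{d}$, then Gaussian concentration for the $1$-Lipschitz map $G\mapsto\norm{G}$, then rescaling by $\sigma$) is the standard proof behind that cited bound. Your remark that the factor $2$ is slack is also right: in the source it comes from the two-sided statement bounding both the largest and smallest singular values at once.
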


\begin{lemma}
    Assume that $n \ge d$, and $\sh{B}$ has rank $k$. Then for all $1 \le j \le n/2$, the following holds with at least $1-\frac{1}{4n^2}$ probability:
    \[ \norm{ (P_{\sh{U}}-P_V)\fh{B} x_j }^2 \le \frac{100\sigma^2 n}{s_k^2(\sh{B})} \norm{\fh{B} x_j}^2. \]
    \label{lemma:second-term}
\end{lemma}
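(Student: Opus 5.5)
The plan is to bound the spectral norm of $P_{\sh{U}}-P_V$ and then use submultiplicativity:
\[ \norm{(P_{\sh{U}}-P_V)\fh{B}x_j} \le \norm{P_{\sh{U}}-P_V}\,\norm{\fh{B}x_j}. \]
Since the factor $\norm{P_{\sh{U}}-P_V}$ does not depend on $j$, a single high-probability event will give the bound simultaneously for all $1\le j\le n/2$. No union bound over $j$ is needed.

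\textbf{Step 1: identify the subspaces.} Because $\sh{B}$ has rank $k$ and all its columns lie in the $k$-dimensional space $V$, its column space is exactly $V$. Hence $V$ is the span of the top $k$ left singular vectors of $\sh{B}$. The subspace $\sh{U}$ is the span of the top $k$ left singular vectors of $\sh{A}=\sh{B}+\sh{C}$. We apply Lemma~\ref{lemma:perturb-spectral-bound} with $B=\sh{B}$, $E=\sh{C}$, $j=r=k$ and $s_{k+1}=0$. Lemma~\ref{lemma:perturb-spectral-bound} is phrased for the singular vectors on one side, but it applies to either side by transposition. This gives
\[ \norm{P_{\sh{U}}-P_V}\le \frac{2\norm{\sh{C}}}{s_k(\sh{B})}. \]
One technical point needs care here. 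If $\sh{U}$ were not $k$-dimensional, or ties occurred among the top singular values, the bound could fail. Ties occur with probability zero for Gaussian noise, and generically $s_k(\sh{A})>0$, so I would simply note this.

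\textbf{Step 2: bound the noise.} The matrix $\sh{C}$ is $d\times(n/2+1)$ with i.i.d.\ $N(0,\sigma^2)$ entries. Apply Lemma~\ref{lemma:gaussian-matrix-norm} with $t=\sigma\sqrt{2\ln(8n^2)}$, so the failure probability is $2\exp(-t^2/(2\sigma^2)) = 1/(4n^2)$. Then, using $n\ge d$,
\[ \norm{\sh{C}} \le \sigma\left(\sqrt{n/2+1}+\sqrt{d}\right)+\sigma\sqrt{2\ln(8n^2)} \le 5\sigma\sqrt{n} \]
for $n$ at least a modest constant. The arithmetic is $\sqrt{n/2+1}\le \sqrt n$, $\sqrt d\le\sqrt n$, and $\sqrt{2\ln(8n^2)}\le 3\sqrt n$; small $n$ can be absorbed or checked directly.

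\textbf{Step 3: combine.} Putting the two steps together gives $\norm{P_{\sh{U}}-P_V}\le 10\sigma\sqrt n/s_k(\sh{B})$. Squaring yields
\[ \norm{(P_{\sh{U}}-P_V)\fh{B}x_j}^2\le \frac{100\sigma^2 n}{s_k^2(\sh{B})}\norm{\fh{B}x_j}^2 \]
for all $j$ simultaneously, with probability at least $1-1/(4n^2)$.

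The main obstacle is only bookkeeping. One part is confirming that the constant $5$ in Step 2 absorbs the $\sqrt{\ln n}$ deviation term together with both $\sqrt{n}$ and $\sqrt d$. The other is justifying the identification of $V$ with the top-$k$ singular subspace of $\sh{B}$ when applying Lemma~\ref{lemma:perturb-spectral-bound}.
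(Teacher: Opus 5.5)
Your proposal is correct and follows the paper's proof step for step. Both arguments use the Wedin-type bound $\norm{P_{\sh{U}}-P_V}\le 2\norm{\sh{C}}/s_k(\sh{B})$ with $s_{k+1}=0$, then Lemma~\ref{lemma:gaussian-matrix-norm} with the same $t=\sigma\sqrt{2\ln(8n^2)}$ (failure probability $1/(4n^2)$), then submultiplicativity with one event covering all $j$; the only difference is arithmetic, since the paper bounds $\norm{\sh{C}}$ by $4\sigma\sqrt{n}$ (using $n\ge 10$) to get factor $8$ and constant $64\le 100$, while your cruder $5\sigma\sqrt{n}$ hits $100$ exactly.
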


\begin{lemma}
    Assume $k \ge \ln{n}$ and $n\geq d$. Then $\left| \norm{ P_{\sh{U}}\fh{C} x_j }^2 -2k\sigma^2 \right| \le 40 \sqrt{\ln{n}} \sqrt{k} \sigma^2$ holds for all $1 \le j \le n/2$, with at least $1-\frac{1}{2n^2}$ probability.
    \label{lemma:third-term}
\end{lemma}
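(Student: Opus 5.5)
We prove Lemma~\ref{lemma:third-term} by conditioning on $\sh{A}$, which fixes the projection $P_{\sh{U}}$. Then $\norm{P_{\sh{U}}\fh{C}x_j}^2$ becomes a scaled chi-squared variable with $k$ degrees of freedom, and standard chi-squared tail bounds plus a union bound over $j$ finish the argument.

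\textbf{Step 1 (independence).} The subspace $\sh{U}$ is computed from columns $n/2+1,\dots,n$ of $A$. For the noise to be usable in the first half, $\sh{U}$ must be a function of $\sh{C}$ and $B$ only and must not involve the query column. This is the independence the paper appeals to in the overview: Algorithm~\ref{alg:main} reads the top singular vectors of the first $n/2$ columns of $\sh{A}$, not of the full matrix. Consequently $P_{\sh{U}}$ is independent of $\fh{C}$, including the query-noise column $C_{\cdot,n+1}$. We condition on $\sh{C}$, so $P_{\sh{U}}$ is a fixed rank-$k$ orthogonal projection; rank exactly $k$ holds almost surely.

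\textbf{Step 2 (distribution).} The vector $g_j=\fh{C}x_j=C_{\cdot,j}-C_{\cdot,n+1}$ is a difference of two independent $N_d(0,\sigma^2 I_d)$ vectors, so $g_j\sim N_d(0,2\sigma^2 I_d)$. Gaussians are spherically symmetric, so projecting onto a fixed $k$-dimensional subspace gives $\norm{P_{\sh{U}}g_j}^2 = 2\sigma^2 Z$ with $Z\sim\chi^2_k$. In particular the expectation is exactly $2k\sigma^2$, which explains the centering term in the statement.

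\textbf{Step 3 (concentration).} We apply the Laurent--Massart inequality:
\[
\Pr\left[|Z-k|\ge 2\sqrt{kt}+2t\right]\le 2e^{-t}.
\]
Take $t=3\ln n + \ln 2$, so that each $j$ fails with probability at most $\frac{1}{n^3}$; a union bound over $n/2$ indices then gives total failure probability at most $\frac{1}{2n^2}$. The hypothesis $k\ge \ln n$ absorbs the linear term: $t\le 4\ln n \le 4\sqrt{k\ln n}$, using $\ln n\le\sqrt{k\ln n}$, and $2\sqrt{kt}\le 2\sqrt{4k\ln n}=4\sqrt{k\ln n}$. Hence $|Z-k|\le 12\sqrt{k\ln n}$ (assuming $n\ge 2$), and multiplying by $2\sigma^2$ gives a deviation of at most $24\sigma^2\sqrt{k\ln n}\le 40\sigma^2\sqrt{k\ln n}$. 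Because the bound holds conditionally for every fixed $\sh{C}$, it also holds unconditionally. The assumption $n\ge d$ is not needed here.

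\textbf{Main obstacle.} The key step is Step 1. The projection must genuinely not depend on the query noise shared between the two halves. If $\sh{U}$ used the query column of $\sh{A}$, then $P_{\sh{U}}$ would be correlated with $C_{\cdot,n+1}$, and the chi-squared argument would break. I would therefore explicitly invoke the definition of $\sh{U}$ as the span of the top singular vectors of $\sh{A}[1,\frac n2]$. Once independence is secured, the rest is routine.
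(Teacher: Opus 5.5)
Your proof is correct and follows essentially the same route as the paper's: independence of $P_{\sh{U}}$ from the first-half noise, spherical symmetry giving $2\sigma^2\chi^2_k$, the Laurent--Massart bound, and a union bound over $j$. It is in fact slightly more careful, since you use the correct two-sided factor $2e^{-t}$ and you make explicit that independence requires $\sh{U}$ to be computed from $\sh{A}[1,\frac n2]$ without the shared query column, a point the paper's proof glosses over when it says $P_{\sh{U}}$ ``depends only on $\sh{C}$''. One wording fix: condition only on the second-half data noise $C_{\cdot,n/2+1},\dots,C_{\cdot,n}$, not on $\sh{A}$ or $\sh{C}$, because in the paper's notation these contain the query column, and freezing $C_{\cdot,n+1}$ would make $\fh{C}x_j$ conditionally $N(-C_{\cdot,n+1},\sigma^2 I_d)$ rather than $N(0,2\sigma^2 I_d)$.
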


\begin{lemma}
    Assume $n \ge d$, and $\sh{B}$ has rank $k$. Then, for all $1 \le j \le n/2$, the following holds with at least $1-\frac{1}{2n^2}$ probability.:
    \[ \left| x_j^T {\fh{B}}^T (P_{\sh{U}}-P_V) \fh{B} x_j \right| \le \frac{4\sigma \sqrt{n}}{s_k(\sh{B})} \norm{\fh{B} x_j}^2 .\]
    \label{lemma:fourth-term}
\end{lemma}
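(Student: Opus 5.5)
We bound $|x_j^T {\fh{B}}^T (P_{\sh{U}}-P_V) \fh{B} x_j|$ by Cauchy--Schwarz and the operator norm:
\[ \left| x_j^T {\fh{B}}^T (P_{\sh{U}}-P_V) \fh{B} x_j \right| \le \norm{\fh{B} x_j}\,\norm{P_{\sh{U}}-P_V}\,\norm{\fh{B} x_j} = \norm{P_{\sh{U}}-P_V}\,\norm{\fh{B}x_j}^2 .\]
The randomness enters only through the single matrix $\norm{P_{\sh{U}}-P_V}$. This matrix does not depend on $j$, so one high-probability event serves all $1 \le j \le n/2$ simultaneously and no union bound over $j$ is needed. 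It therefore suffices to show
\[ \norm{P_{\sh{U}}-P_V} \le \frac{4\sigma\sqrt{n}}{s_k(\sh{B})} \]
with probability at least $1-\frac{1}{2n^2}$.

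We apply Lemma~\ref{lemma:perturb-spectral-bound} with the unperturbed matrix $\sh{B}$, the perturbation $E=\sh{C}$, and $j=k=r$.
\begin{itemize}
\item Since the columns of $\sh{B}$ lie in $V$ and $\sh{B}$ has rank $k$, the span of its top $k$ left singular vectors is exactly $V$.
\item The span of the top $k$ left singular vectors of $\sh{A}=\sh{B}+\sh{C}$ is $\sh{U}$, which is almost surely $k$-dimensional.
\item With $s_{k+1}=0$, the lemma gives $\norm{P_{\sh{U}}-P_V} \le 2\norm{\sh{C}}/s_k(\sh{B})$.
\end{itemize}
One caveat: $\sh{U}$ is defined from the $n/2$ data columns, possibly together with the query column. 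This changes only the dimensions of $\sh{C}$ by at most one column, so the argument is the same.

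Next we bound $\norm{\sh{C}}$ using Lemma~\ref{lemma:gaussian-matrix-norm}. The matrix $\sh{C}$ has at most $n/2+1\le n$ columns and $d \le n$ rows. Take $t=\sigma\sqrt{n}$ (up to constants) with $n$ large enough that $2\exp(-t^2/(2\sigma^2)) = 2e^{-n/2} \le \frac{1}{2n^2}$. Then
\[ \norm{\sh{C}} \le \sigma(\sqrt{n/2+1}+\sqrt{d}) + t \le 2\sigma\sqrt{n} \]
with the required probability. A cleaner choice is $t=\sigma\sqrt{2\ln(4n^2)}$, since $\sqrt{n/2+1}+\sqrt{d}+\sqrt{2\ln(4n^2)}\le 2\sqrt n$ for moderately large $n$.

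Combining the two bounds gives $\norm{P_{\sh{U}}-P_V}\le 4\sigma\sqrt n/s_k(\sh{B})$, which yields the claim. The only delicate step is arithmetic: we must check that the constant stays within $2\sqrt{n}$ given $d\le n$. This holds since $\sqrt{n/2}+\sqrt{n}\approx 1.707\sqrt{n}$, which leaves room for the $t$ term when $n$ is not tiny.
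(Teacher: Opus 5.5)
Your proof is the paper's proof. The paper bounds the form by $\norm{P_{\sh{U}}-P_V}\,\norm{\fh{B}x_j}^2$ and reuses (\ref{145}), i.e.\ Lemma~\ref{lemma:perturb-spectral-bound} with $s_{k+1}=0$ plus Lemma~\ref{lemma:gaussian-matrix-norm}, with one event serving all $j$.

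The only issue is the constant, and neither you nor the paper quite reaches the stated $4$. Inequality (\ref{145}) gives $8\sigma\sqrt{n}/s_k(\sh{B})$ for $n\ge\max(d,10)$, so the paper's argument delivers $8$. This is harmless downstream, since Lemma~\ref{lemma:norm-estimation} allows $40\sigma\sqrt n/s_k(\sh B)$ for twice this term.

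Your attempt at $4$ needs $\norm{\sh{C}}\le 2\sigma\sqrt n$, and both of your choices of $t$ have problems:
\begin{itemize}
\item The choice $t=\sigma\sqrt n$ fails outright: with $d=n$ the bound is about $(0.71+1+1)\sigma\sqrt n\approx 2.7\sigma\sqrt n$.
\item The choice $t=\sigma\sqrt{2\ln(4n^2)}$ needs $\sqrt{n/2+1}+\sqrt{2\ln(4n^2)}\le\sqrt n$, which holds only for roughly $n\ge 300$. The statement has no such hypothesis, so ``not tiny'' understates the requirement.
\end{itemize}

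To get the stated constant without a large-$n$ assumption, use the fact that $y=\fh{B}x_j\in V$. This gives the exact identity $y^T(P_{\sh U}-P_V)y=\norm{P_{\sh U}y}^2-\norm{y}^2=-\norm{(P_{\sh U}-P_V)y}^2$. So the term is second order, and by (\ref{145}) it is at most $\norm{P_{\sh U}-P_V}^2\norm{y}^2\le 64\sigma^2 n\,\norm{y}^2/s_k^2(\sh B)$. That is at most $4\sigma\sqrt n\,\norm{y}^2/s_k(\sh B)$ whenever $\sigma\sqrt n\le s_k(\sh B)/16$. This covers the regime of Theorem~\ref{thm:main}, where $\sigma\sqrt n\le \eps\, s_k(\sh B)/75$ with $\eps\le 1$.
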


\begin{lemma}
    For all $1 \le j \le n/2$, the following holds:
    \[ \left|x_j^T {\fh{B}}^T P_{\sh{U}} C x_j \right| \le 3\sigma \norm{\fh{B}x_j} \sqrt{\ln{n}} \]
    with at least $1-\frac{1}{4n^2}$ probability.
    \label{lemma:fifth-term}
\end{lemma}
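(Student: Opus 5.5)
The key fact is that $\sh{U}$ is a function of $\sh{A}=\sh{B}+\sh{C}$ only. The noise in $\sh{A}$ comes from columns $n/2+1,\dots,n$ of $C$ and, apparently, the query column $C_{\cdot,n+1}$; the query column also appears in $\fh{C}$. I will argue as the paper's Overview does: $\sh{U}$ is computed from the first $n/2$ columns of $\sh{A}$, so it is independent of $\fh{C}$.

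Conditioning on $\sh{U}$, the vector $\fh{C}x_j=\fh{C}_{\cdot,j}-\fh{C}_{\cdot,n/2+1}$ is Gaussian $N(0,2\sigma^2 I_d)$. It is also independent of $P_{\sh{U}}$, which is a fixed symmetric projection, and of the deterministic matrix $\fh{B}$. The quantity in Lemma~\ref{lemma:fifth-term} is
\[ x_j^T\fh{B}^TP_{\sh{U}}\fh{C}x_j=\langle w_j,g_j\rangle, \qquad w_j=P_{\sh{U}}\fh{B}x_j,\quad g_j=\fh{C}x_j, \]
where I read the $C$ in the statement as $\fh{C}$. Conditionally this is a one-dimensional Gaussian with mean $0$ and variance $2\sigma^2\norm{w_j}^2\le 2\sigma^2\norm{\fh{B}x_j}^2$, because projections are contractions.

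Next apply the Gaussian tail bound $\Pr[|Z|>t\cdot\mathrm{sd}]\le 2e^{-t^2/2}$ with $t\cdot\mathrm{sd}=3\sigma\norm{\fh{B}x_j}\sqrt{\ln n}$. Using $\mathrm{sd}\le\sqrt2\,\sigma\norm{\fh{B}x_j}$, this gives $t\ge \tfrac{3}{\sqrt2}\sqrt{\ln n}$ and $t^2/2\ge \tfrac94\ln n$. So each fixed $j$ fails with probability at most $2n^{-9/4}$. A union bound over the $n/2$ indices gives total failure probability at most $n^{-5/4}$.

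This falls short of the claimed $\tfrac{1}{4n^2}$. If the bound is only needed for each fixed $j$ separately, the per-$j$ failure probability $2n^{-9/4}$ is already at most $\tfrac1{4n^2}$ once $n\ge 64$, matching how Lemma~\ref{lemma:norm-estimation} is stated. If a simultaneous bound over all $j$ is needed, I would enlarge the constant, replacing $3$ by about $\sqrt{12}$ so that $t^2/2\ge 3\ln n$. The main point needing care is therefore the constant and probability bookkeeping, together with justifying the independence of $\sh{U}$ from $\fh{C}$. Since the query noise is shared by both halves, I would insist that $\sh{U}$ be computed from $\sh{A}[1,\frac n2]$ only, as the Overview specifies. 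Once conditioning is valid, integrating the conditional bound over $\sh{U}$ gives the unconditional statement.
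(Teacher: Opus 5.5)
Your proof is essentially the paper's: it also sets $y=x_j^T{\fh{B}}^TP_{\sh{U}}$, uses independence of $P_{\sh{U}}$ from $\fh{C}x_j$ to get $y\fh{C}x_j\sim N(0,2\sigma^2\norm{y}^2)$ with $\norm{y}\le\norm{\fh{B}x_j}$, and applies a Gaussian tail bound, while glossing over the two points you flag. These are the shared query-noise column, which forces $\sh{U}$ to be computed from $\sh{A}[1,\frac{n}{2}]$ only, and the fact that the constant $3$ gives only per-$j$ failure $2n^{-9/4}$ rather than a simultaneous $1-\frac{1}{4n^2}$ guarantee. Two small arithmetic fixes to your bookkeeping: $2n^{-9/4}\le\frac{1}{4n^2}$ needs $n\ge 8^4=4096$ (not $n\ge 64$), and the constant $\sqrt{12}$ gives simultaneous failure $n^{-2}$ rather than $\frac{1}{4n^2}$, so take the constant to be $4$ (failure $n^{-3}$), which is harmless because Lemma~\ref{lemma:norm-estimation} allots coefficient $20$ to twice this term.
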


We now use Lemma \ref{lemma:norm-estimation} to prove the following corollary, which quantifies the noise level tolerance needed for a $(1 \pm O(\eps))$-approximation of the distance:

\begin{corollary}
    Suppose the noise $\sigma$ satisfies:
    \[ \sigma \le \min \left( \sqrt{\frac{\eps}{240}} \frac{\norm{\fh{B} x_j}}{(k\ln{n})^{1/4}}, \frac{\eps s_k(\fh{B})}{75\sqrt{n}}, \frac{\eps \norm{\fh{B}x_j}}{36 \sqrt{\ln{n}}} \right). \]
    Then, we have
    \[ \norm{ P_{\sh{U}} \fh{A} x_j }^2 - 2k \sigma^2 \in \left[ \left( 1-\frac{\eps}{3} \right) \norm{\fh{B} x_j}^2, \left( 1+\frac{\eps}{3} \right) \norm{\fh{B} x_j}^2 \right] \]
    for all $1 \le j \le \frac{n}{2}$ with at least $1-\frac{1}{2n}$ probability. Similarly, the above holds for $\sh{A}$ and $\sh{B}$. 
    \label{corollary:sigma-limit}
\end{corollary}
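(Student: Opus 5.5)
The plan is to apply Lemma~\ref{lemma:norm-estimation} to every $1 \le j \le n/2$ and take a union bound. Each application fails with probability at most $1/n^2$, and there are $n/2$ indices, so all the two-sided bounds in (\ref{178}) hold simultaneously with probability at least $1-\frac{1}{2n}$. It then suffices to show, deterministically, that under the hypothesis on $\sigma$ each of the four error terms on the right-hand side of (\ref{178}) is at most $\frac{\eps}{12}\norm{\fh{B}x_j}^2$. Their sum is then at most $\frac{\eps}{3}\norm{\fh{B}x_j}^2$, which is exactly the claimed interval.

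The hypothesis on $\sigma$ is stated with $s_k(\fh{B})$, whereas (\ref{178}) involves $s_k(\sh{B})$. I will read the corollary with the matching singular value $s_k(\sh{B})$, or equivalently use the $\min(s_k(\fh{B}),s_k(\sh{B}))$ form that appears in Theorem~\ref{thm:main}. With $s:=s_k(\sh{B})$ and $b:=\norm{\fh{B}x_j}$, and assuming $\eps\le 1$ (the natural regime; otherwise one first caps $\eps$), the four checks are as follows.

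\begin{itemize}
\item From $\sigma\sqrt n/s\le \eps/75$, the first term is
\[
\frac{100\sigma^2 n}{s^2}\,b^2\le \frac{100\eps^2}{75^2}\,b^2\le \frac{\eps}{56}\,b^2,
\]
and the second is
\[
\frac{40\sigma\sqrt n}{s}\,b^2\le \frac{40\eps}{75}\,b^2 .
\]
\item From $\sigma\sqrt{\ln n}\le \eps b/36$, the third term is
\[
20\sigma b\sqrt{\ln n}\le \frac{20\eps}{36}\,b^2 .
\]
\item From $\sigma^2\sqrt{k\ln n}\le \eps b^2/240$, the fourth term is
\[
40\sigma^2\sqrt{k\ln n}\le \frac{\eps}{6}\,b^2 .
\]
\end{itemize}

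The bookkeeping does not close as cleanly as hoped. The second and third terms exceed $\eps/12$ with the stated constants. Lemma~\ref{lemma:fourth-term} and Lemma~\ref{lemma:fifth-term} actually give constants $2\cdot4=8$ and $2\cdot 3=6$, which yield $\frac{8\eps}{75}$ and $\frac{6\eps}{36}=\frac{\eps}{6}$. Using these sharper constants, the total is
\[
\frac{\eps}{56}+\frac{8\eps}{75}+\frac{\eps}{6}+\frac{\eps}{6}<\frac{\eps}{2}.
\]
That is still above $\eps/3$. The main obstacle is therefore purely constant-tracking. I would go back to the individual Lemmas~\ref{lemma:second-term}--\ref{lemma:fifth-term} rather than the loosened constants of (\ref{178}), and check whether the target $\eps/3$ is achievable. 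If it is not, I would either adjust the allocation by asking for a slightly smaller constant in the hypothesis, or prove the interval with $\eps/2$ in place of $\eps/3$. The downstream Theorem~\ref{thm:main} only needs some constant fraction of $\eps$.

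Finally, the hypotheses of Lemma~\ref{lemma:norm-estimation} must hold: rank-$k$ $\fh{B}$ and $\sh{B}$, $n\ge d$, and $k\ge\ln n$. The rank condition is implicit, since $s_k>0$ is needed for the bound on $\sigma$ to be nontrivial, and the other two are standing assumptions that I would state explicitly. The statement for $\sh{A}$ and $\sh{B}$ follows by the symmetric argument using the second half of Lemma~\ref{lemma:norm-estimation}, again with a union bound over $n/2$ indices.
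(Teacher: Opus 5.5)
You follow the route the paper indicates; the paper gives no proof beyond saying the corollary follows from Lemma~\ref{lemma:norm-estimation}. Your diagnosis is also correct. Reading $s_k(\fh{B})$ as $s_k(\sh{B})$, as you do, and plugging the hypothesis into (\ref{178}) bounds the error only by about $(\frac{1}{56}+\frac{8}{15}+\frac{5}{9}+\frac16)\eps\, b^2\approx 1.27\,\eps\, b^2$, so $\eps/3$ does not follow that way. The gap is that your proposal then stops and proves a different statement instead of this one. Your $\eps/2$ fallback would also break Theorem~\ref{thm:main}. Its last step needs $\frac{1+c\eps}{1-c\eps}\le 1+\eps$, i.e.\ $c\le\frac{1}{2+\eps}$, which fails for $c=\frac12$ and every $\eps>0$, while $c=\frac13$ works for $\eps\le 1$.

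The missing idea is that the two $s_k$-dependent terms must not be bounded separately, because the linear one, which is what exhausts your budget, cancels. Set $P=P_{\sh{U}}$, $b=\norm{\fh{B}x_j}$ and $\eta=\norm{P-P_V}$. Since $P_V\fh{B}x_j=\fh{B}x_j$, we get $(P-P_V)\fh{B}x_j=-(I-P)\fh{B}x_j$ and $x_j^T\fh{B}^T(P-P_V)\fh{B}x_j=-\norm{(I-P)\fh{B}x_j}^2$. So the second and fourth terms of (\ref{121}) sum to exactly $-\norm{(I-P)\fh{B}x_j}^2\in[-\eta^2 b^2,0]$. Equivalently, expand directly:
\[
\norm{P\fh{A}x_j}^2=\norm{P\fh{B}x_j}^2+2x_j^T\fh{B}^TP\fh{C}x_j+\norm{P\fh{C}x_j}^2,\qquad b^2-\norm{P\fh{B}x_j}^2=\norm{(P_V-P)\fh{B}x_j}^2\le \eta^2 b^2 .
\]
The three pieces then bound as follows.
\begin{itemize}
\item The signal term: by (\ref{145}) and $\sigma\sqrt{n}/s_k(\sh{B})\le \eps/75$, we have $\eta^2\le \frac{64}{5625}\eps^2$, and this contribution is nonpositive.
\item The noise cross term: Lemma~\ref{lemma:fifth-term} gives $|2x_j^T\fh{B}^TP\fh{C}x_j|\le 6\sigma b\sqrt{\ln n}\le \frac{\eps}{6}b^2$.
\item The $\chi^2$ term: Lemma~\ref{lemma:third-term} gives $|\norm{P\fh{C}x_j}^2-2k\sigma^2|\le 40\sigma^2\sqrt{k\ln n}\le \frac{\eps}{6}b^2$.
\end{itemize}
Hence the upward deviation is at most $\frac{\eps}{3}b^2$ exactly.

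For the downward side, use the constant that the proof of Lemma~\ref{lemma:third-term} actually yields: $2\sigma^2(2\sqrt{6k\ln n}+12\ln n)\le 34\sigma^2\sqrt{k\ln n}$, using $k\ge \ln n$. The total is then at most $\left(\frac{34}{240}+\frac{1}{6}+\frac{64}{5625}\right)\eps\, b^2<\frac{\eps}{3}b^2$ for $\eps\le 1$. The events behind (\ref{145}) and Lemmas~\ref{lemma:third-term} and~\ref{lemma:fifth-term} are uniform in $j$. They fail with total probability at most $\frac{1}{4n^2}+\frac{1}{2n^2}+\frac{1}{4n^2}\le\frac{1}{2n}$. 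So the corollary holds with its stated constants, and you need neither a smaller hypothesis constant nor a weaker $\eps/2$ conclusion.
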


We now provide the proof of Theorem \ref{thm:main}, which offers the theoretical guarantee for our main algorithm:
\begin{proof}
    Our algorithm returns the index $j$ corresponding to the minimum value found across two parts of the minimization: $\min_{1 \le j \le n/2} \norm{ P_{U^{(2)}} A^{(1)} x_j }$ and $\min_{n/2+1 \le j \le n} \norm{ P_{U^{(1)}} A^{(2)} x_{j-n/2} }$. Without loss of generality, assume that the first column of $B$ (corresponding to $p_1$) is the nearest neighbor to the query point $q$, then we need to show that the algorithm selects an index $j^*$ such that $\norm{\fh{B}x_{j^*}} \le (1+\varepsilon) \norm{\fh{B}x_1}$.

    Suppose our algorithm returns an index $j^*$ from the first part of the minimization, where $1 \le j^* \le n/2$. By the algorithm's logic, this implies:
    \[ \norm{ P_{U^{(2)}} A^{(1)} x_{j^*} }^2 - 2k\sigma^2 \le \norm{ P_{U^{(2)}} A^{(1)} x_1 }^2 - 2k \sigma^2. \]
    Since the noise level $\sigma$ satisfies the conditions of Corollary \ref{corollary:sigma-limit}, we have:
    \[ \left(1-\frac{\eps}{3}\right) \norm{\fh{B} x_{j^*}}^2 \le \norm{P_{U^{(2)}} A^{(1)} x_{j^*} }^2 \le \norm{ P_{U^{(2)}} A^{(1)} x_1 }^2 \le \left(1+\frac{\eps}{3}\right) \norm{\fh{B}x_1}^2 \]
    This implies
    \[ \norm{\fh{B} x_{j^*}}^2 \le \frac{1+\eps/3}{1-\eps/3} \norm{\fh{B} x_1}^2 \le (1+\eps) \norm{\fh{B} x_1}^2. \]
    Thus, the desired result is obtained. If our algorithm returns an index $j^*$ from the second part of the minimization, a similar argument establishes the correctness of our algorithm.
\end{proof}

\subsection{Discussion}
\label{subsection:discussion}

Our work shows recovery even when the noisy nearest neighbor has changed, whereas \citep{abdullah2014spectral} operates in a regime where the NN is preserved despite the noise, which is a key conceptual distinction. However, Theorem \ref{thm:main} shows that our algorithm can also be affected by the spectral property of the data matrix. We discuss several aspects of this difference in comparison to the prior work, as well as other aspects of our algorithm below.

\paragraph{$s_k(B)$ term in Theorem \ref{thm:main}.}
Our noise bound for $\sigma$ depends not only on $O(k^{-1/4})$ but also on the term $s_k(B) / \sqrt{n}$. One might wonder if this term diminishes as the number of data points $n$ increases, thereby weakening our central claim that $\sigma = O(k^{-1/4})$. However, $s_k(B)$ is a property of the entire $d \times (n+1)$ data matrix $B$. As $n$ increases, the matrix itself grows, and its singular values are generally expected to grow as well.

More precisely, for a data matrix $B$ whose columns (the data points) have a reasonably constant average norm, the squared Frobenius norm $\norm{B}_F^2 = \sum_{i,j} B_{i,j}^2$ will grow approximately linearly with $n$. Since the squared Frobenius norm for our rank-$k$ matrix $B$ is also equal to the sum of its squared singular values, i.e., $\norm{B}_F^2 = \sum_{i=1}^k s_i(B)^2$, this growth must be distributed among the singular values. Furthermore, if the data matrix $B$ is \textit{well-conditioned}—meaning its singular values are not pathologically distributed and the data points do not collapse onto a lower-dimensional subspace—then individual singular values are expected to scale proportionally to $\norm{B}_F = O(\sqrt{n})$. Therefore, for well-conditioned data, the ratio $s_k(B) / \sqrt{n}$ is expected to converge to a non-zero constant rather than decay to $0$. An example of a well-conditioned matrix family is a random matrix where each entry is sampled independently and identically from a random variable with mean $0$, variance $1$, and finite fourth moment \citep{bai1993}. They proved that for such an $m \times n$ matrix ($m \le n$), the smallest singular value is almost surely $(1-\sqrt{m/n})^2$. The assumption that the data matrix is well-conditioned is standard for many real-world, high-dimensional datasets.

\paragraph{Comparison to Prior Work \citep{abdullah2014spectral}.}
Prior work, unlike our algorithm, does not require any assumption on the singular values of the (latent) data matrix. However, this generality comes at the cost of a much stricter noise requirement, where the noise level $\sigma$ must be bounded by an inverse polynomial in the ambient dimension $d$. By contrast, our approach introduces a dependency on the data's spectrum $s_k(B)$ but achieves significantly improved noise tolerance with respect to the intrinsic dimension $k$. Our contribution is therefore most impactful in the common scenario where the ambient dimension $d$ is very high, but the data lies near a low-dimensional, well-conditioned subspace (i.e., large $d$, small $k$). We believe this represents a valuable and practical trade-off.

\paragraph{Connection between Johnson-Lindenstrauss Lemma and Our Results.} 
The upper bounds of $\sigma$ in Theorem \ref{alg:main} depend on three terms. For the first term to be the smallest, our theorem requires $k = \Omega(\ln n / \eps^2)$. This term $\Omega(\ln n / \eps^2)$ also appears in the standard Johnson-Lindenstrauss (JL) Lemma. The JL Lemma states that a random projection of $n$ data points from a $d$-dimensional ambient space into a $k$-dimensional subspace preserves all pairwise distances up to a $(1+\eps)$ multiplicative factor. The JL Lemma thus establishes the required dimensionality for an oblivious random projection to preserve the geometry of $n$ points, where $k = \Omega(\ln n / \eps^2)$ is known to be the information-theoretic complexity for the pairwise distance preservation problem.

Our result implies that for SVD to be an effective denoising strategy for NNS, the underlying subspace containing the true signal must itself have a complexity that scales in a JL-like manner. If $k$ were smaller than this threshold, the $k$-dimensional subspace would be too \textit{simple} to robustly encode the identity of the nearest neighbor against noise across all $n$ points. In summary, the JL Lemma focuses on dimensionality reduction (from $d \rightarrow k$) using random projections. Conversely, our work addresses denoising when the data already possesses a low intrinsic dimension $k$. We demonstrate that if the data has this structure and $k$ satisfies this fundamental complexity requirement, then using SVD enables accurate NNS recovery in a high-noise regime ($\sigma = O(k^{-1/4})$), a regime where standard JL would fail to preserve the nearest neighbor identity.

\paragraph{Matrix Split Ratio Choice.}
In our algorithm, we split the perturbed point set into two halves. This $50$-$50$ split ratio is actually a minimax optimal choice. The primary goal of splitting the data is to obtain a set of learned singular vectors (e.g., spanning subspace $\sh{U}$) that are stochastically independent of the noise in the data we intend to denoise (e.g., $\fh{A}$). The accuracy of this learned subspace $\sh{U}$ as an estimate for the true latent subspace $V$ depends directly on the number of data points used to compute it (i.e. $(1-p)n$). A smaller partition size leads to a less stable SVD and a larger error in the estimated subspace, as quantified by the bounds on $\norm{P_{\sh{U}}-P_V}$ in our proofs (Lemma \ref{lemma:second-term}). Note that the algorithm's overall performance is limited by the weaker of the two subspace estimations. Therefore, due to the inherent symmetry of our algorithm, the $50-50$ split maximizes the size of the smaller partition, thereby providing the most robust performance when no prior information about the data's structure is known.

\paragraph{Runtime of the Algorithm \ref{alg:main}.}
The running time of our methods just involves two SVD calculations on the two halves of the matrix, and using iterative methods, such as those in \citep{musco2015}, each can be done in $O(ndk / \min(1, \sqrt{\text{gap}}))$ time, up to logarithmic factors, where $\text{gap} = s_k / s_{k+1} - 1$. For our recovery guarantees (Theorem \ref{thm:main}), we require $s_k(B)$ to be sufficiently larger than $\sigma$. Then we expect that $s_k(A) \sim s_k(B)$ and $s_{k+1}(A) \sim \sigma$, which implies $\text{gap} = s_k(A) / s_{k+1}(A) - 1$ will be large. This helps both for recovery and for the efficiency of the SVD calculation. We then need to project the data onto the top $k$ components we find, which is $O(ndk)$ time.

\paragraph{Requirement to Know the Intrinsic Dimension $k$.}
In our setting, we need to know $k$ to run our algorithm. In other words, choosing a cutoff $k' > k$ would imply $s_{k'}=0$, yielding a vacuous bound. However, real-world data is often full-rank with a decaying spectrum. In practice, selecting $k' > k$ is a valid heuristic to capture signal "leakage" into lower singular values without capturing the bulk of the noise spectrum.

\paragraph{About $K$-nearest neighbors.}
Corollary~\ref{corollary:sigma-limit} yields an estimate of the latent distances after subtracting the constant noise term. Define the estimated squared distance
\[
\widehat d_j^2 \;:=\;
\begin{cases}
\|P_{\sh{U}}\fh{A}x_j\|^2 - 2k\sigma^2, & 1\le j\le n/2,\\[2pt]
\|P_{\fh{U}}\sh{A}x_{j-n/2}\|^2 - 2k\sigma^2, & n/2+1 \le j\le n,
\end{cases}
\]
and let $d_j^2 := \|Bx'_j\|^2 = \|p_j-q\|^2$ denote the corresponding latent distance (up to the half-splitting reindexing). Under the same noise regime as Corollary~\ref{corollary:sigma-limit}, with high probability we have, simultaneously for all $j\in[n]$,
\[
\widehat d_j^2 \in \Bigl[\bigl(1-\tfrac{\eps}{3}\bigr)d_j^2,\ \bigl(1+\tfrac{\eps}{3}\bigr)d_j^2\Bigr].
\]
Consequently, selecting the $K$ smallest projected noisy distances yields a $(1+O(\eps))$-approximate $K$-nearest-neighbor set for the latent instance, and the order of the $K$ closest neighbors is preserved up to a $(1+O(\eps))$ distortion.

\section{Lower Bounds}
\label{section:lower-bounds}

In Theorem~\ref{thm:main}, we showed an algorithmic result for which $\sigma \in O(1/k^{1/4})$ (resp. $\sigma \in O(\eps)$). In this section, we demonstrate that this dependence on $k$ (resp. $\eps$) for $\sigma$ is optimal by establishing an information-theoretic lower bound showing that $\sigma = O(1/k^{1/4})$ (resp. $\sigma \in O(\eps)$) is necessary. We prove this hardness result for a computationally easier problem than the one we have addressed. Let $p_1, \dots, p_n, q$ be points in $\mathbb{R}^k$. We note that reducing the dimension of the ambient space only makes the problem easier in our reduction. Let $\delta_1, \dots, \delta_n, \delta_q$ be noise vectors where each component is drawn independently and identically from $N(0, \sigma^2)$. We observe the perturbed points: $\tilde{p}_i = p_i + \delta_i$, and $\tilde{q} = q + \delta_q$.

\subsection{Dependence on the subspace dimension $k$}
\label{subsection:k-for-sigma}

The following sequence represents a reduction from the original problem to our target problem. Specifically, we are starting from our NN recovery problem and making the problem instance easier for any potential algorithm to solve. If we can prove that even this simplified, easier problem is impossible to solve reliably, it immediately implies that the original, harder problem (with the larger gap) must also be impossible to solve. Without loss of generality, assume the distance from $q$ to its nearest neighbor is $1$. We assume $\eps > 0$ is a fixed constant throughout this chain of reductions and each step introduces at most a constant change in parameters.

\begin{itemize}
    \item Given $\tilde{p}_1, \ldots, \tilde{p}_n, \tilde{q}$, there is a unique index $j^*$ such that $\norm{q - p_{j^*}} \le 1$. All other points satisfy $\norm{q - p_j} \ge 1+\eps$. The goal is to output $j^*$.
    \item Given $\tilde{p}_1, \tilde{p}_2$, the goal is to distinguish between the two cases: (i) $p_1 = \mathbf{0}$ and $\norm{p_2} \ge 1$, or (ii) $\norm{p_1} \ge 1$ and $p_2 = \mathbf{0}$.
    \item Given $\tilde{p}_1, \tilde{p}_2$, the goal is to distinguish between the two cases: (i) $p_1 = \mathbf{0}$ and $p_2 \sim N(0, \frac{1}{k} I_k)$, or (ii) $p_1 \sim N(0, \frac{1}{k} I_k)$ and $p_2 = \mathbf{0}$.
\end{itemize}

To show the reduction from the first problem to the second, we show that the second problem is a special case of the first: Consider the case where the first problem has $n=2$, $q=\mathbf{0}$. If $p_1=\mathbf{0}, \norm{p_2} \geq 1$, the only correct answer to NN is $p_1$ ($p_2$ is not a valid answer). If $\norm{p_1}\geq 1, p_2=\mathbf{0}$, the only correct answer is $2$. The final reduction step is based on the concentration of the chi-squared distribution. For asymptotically large $k$, this concentration implies that any $p \sim N(0, \frac{1}{k} I_k)$ has a norm that is almost $1$ except with exponentially small probability.

The following lemma is a hardness result for the last problem:
\begin{lemma}
Suppose $X, Y$ are random vectors in ${\mathbb{R}}^k$, where $X \sim N(0,\sigma^2 I_k)$ and $Y \sim N(0,(\sigma^2+\frac{1}{k})I_k)$. If $\sigma \in \omega(1/k^{1/4})$, then given the unordered pair $\{ X, Y \}$, no test can tell which distribution the sample came from with high probability.
\label{lemma:k-impossibility}
\end{lemma}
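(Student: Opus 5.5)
We reduce the statement to a two-hypothesis testing problem and bound the total variation distance between the two hypotheses. Observing the unordered pair $\{X,Y\}$ is the same as observing an ordered pair $(Z_1,Z_2)$ under one of two laws. Under hypothesis (i), $Z_1 \sim N(0,\sigma^2 I_k)$ and $Z_2 \sim N(0,(\sigma^2+\frac1k) I_k)$ independently. Under hypothesis (ii), the roles are swapped. Call these product measures $P$ and $Q$. Any test errs with probability at least $\frac12(1-\mathrm{TV}(P,Q))$ under the uniform prior. So it suffices to show $\mathrm{TV}(P,Q)\to 0$ whenever $\sigma k^{1/4}\to\infty$.

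We bound total variation through the Hellinger distance, since it tensorizes cleanly over product Gaussians. Write $a=\sigma^2$ and $b=\sigma^2+1/k$. The Hellinger affinity of $N(0,aI_k)$ and $N(0,bI_k)$ is
\[
\rho=\left(\frac{2\sqrt{ab}}{a+b}\right)^{k/2}.
\]
The affinity of $P$ and $Q$ is the product of the coordinatewise affinities, and each of the two factors equals $\rho$. Hence $\mathrm{aff}(P,Q)=\rho^2=\left(\frac{2\sqrt{ab}}{a+b}\right)^{k}$. Setting $t=\frac{b-a}{a+b}=\frac{1/k}{2\sigma^2+1/k}$, we have $\frac{2\sqrt{ab}}{a+b}=\sqrt{1-t^2}$, so $\mathrm{aff}(P,Q)=(1-t^2)^{k/2}$.

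Since $\sigma\in\omega(k^{-1/4})$, we have $\sigma^2 k\to\infty$, so $t\le \frac{1}{2\sigma^2 k}\to 0$ and $t^2\le \frac{1}{4\sigma^4k^2}$. This gives $k t^2\le \frac{1}{4\sigma^4 k}\to 0$, and therefore $\mathrm{aff}(P,Q)\ge 1-\frac{k}{2}t^2 - o(1)\to 1$. By the standard inequality $\mathrm{TV}\le \sqrt{1-\mathrm{aff}^2}$, we get $\mathrm{TV}(P,Q)\to 0$. Every test therefore has success probability at most $\frac12+o(1)$.

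The main obstacle is the bookkeeping around the unordered-pair formulation, namely making sure the symmetrized hypotheses really correspond to the product measures $P$ and $Q$ above. This is resolved by the observation that an algorithm given the pair in a fixed order is at least as powerful as one given it unordered. The threshold $\sigma\sim k^{-1/4}$ appears exactly where $k t^2\asymp 1/(\sigma^4 k)$ becomes constant. For intuition, the natural test statistic $\|Z_1\|^2-\|Z_2\|^2$ has mean gap $1$ but standard deviation of order $\sigma^2\sqrt{k}$, consistent with the same threshold.
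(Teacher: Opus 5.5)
Your proof is correct and has the same skeleton as the paper's. Both recast the unordered pair as testing $P_1\otimes P_2$ against its swap $P_2\otimes P_1$, and both rely on tensorization over the product; the difference is which divergence you tensorize.

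The paper uses KL, which is additive, so $D_{KL}(R_1\|R_2)=D_{KL}(P_1\|P_2)+D_{KL}(P_2\|P_1)$. Plugging in the Gaussian KL formula and $1/(1+\eta)\le 1-\eta+\eta^2$ with $\eta=1/(k\sigma^2)$ gives $D_{KL}(R_1\|R_2)\le \frac{1}{2k\sigma^4}$. The paper then stops, leaving implicit the passage from small KL to indistinguishability, which is Pinsker.

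You use the Hellinger affinity, which is multiplicative, and get the exact closed form $(1-t^2)^{k/2}$. You then close the argument explicitly with $\mathrm{TV}\le\sqrt{1-\mathrm{aff}^2}$ and the Bayes-error bound $\frac12(1-\mathrm{TV})$.

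Quantitatively the two routes coincide. Pinsker applied to the paper's bound gives $\mathrm{TV}\le\frac{1}{2\sigma^2\sqrt{k}}$. Your argument gives exactly the same, via $\mathrm{aff}^2=(1-t^2)^k\ge 1-kt^2$ and $kt^2\le\frac{1}{4\sigma^4 k}$. So the paper's route is a slightly shorter computation from a standard formula. Yours is self-contained all the way down to the test-error statement, which is what the lemma actually asserts.

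One small cleanup: the $-o(1)$ in your lower bound on the affinity is unnecessary. Apply Bernoulli's inequality to the squared affinity $(1-t^2)^k$, which is what enters the TV bound. That works for every $k\ge 1$ and avoids the exponent $k/2<1$ case when $k=1$.
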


\subsection{Dependence on the distance gap $\eps$}
\label{subsection:eps-lower-bound}

In a similar way as in Section~\ref{subsection:k-for-sigma}, we reduce the original NN recovery problem to a computationally simpler one to facilitate hardness analysis. The problem reduction details are available in the appendix. We only give the final lemma below:
\begin{lemma}
Suppose $X, Y$ are random vectors in ${\mathbb{R}}^k$, where $X \sim N(\mathbf{e}_1,\sigma^2 I_k)$ and $Y \sim N((1+\eps)\mathbf{e}_1, \sigma^2 I_k)$. If $\sigma \in \omega(\eps)$, then given the unordered pair $\{ X, Y \}$, no test can tell which distribution the sample came from with high probability.
\label{lemma:eps-impossibility}
\end{lemma}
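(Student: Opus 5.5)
Since $X$ and $Y$ have the same covariance $\sigma^2 I_k$ and differ only in their means, I would first reduce the problem to one dimension. The observation is the unordered pair $\{X,Y\}$, and the two hypotheses are the two possible assignments of the labels. Under the first, the element labelled ``first'' has law $N(\mathbf{e}_1,\sigma^2 I_k)$ and the other $N((1+\eps)\mathbf{e}_1,\sigma^2 I_k)$; under the second, the roles are swapped. In both cases the coordinates $2,\dots,k$ of the two vectors are i.i.d.\ $N(0,\sigma^2)$ and independent of the first coordinates. Hence the likelihood ratio between the hypotheses depends only on the first coordinates. Concretely, the joint density of an ordered realization $(u,v)$ under hypothesis (i) against (ii) has ratio
\[
\exp\!\Big(\tfrac{\eps}{\sigma^2}\,(u_1-v_1)\Big)^{\pm 1}\!,
\]
with the $\|u\|^2,\|v\|^2$ terms and the $\|\mathbf{e}_1\|,\|(1+\eps)\mathbf{e}_1\|$ terms cancelling. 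So by Neyman--Pearson it suffices to treat the scalar problem: given two reals $a\sim N(1,\sigma^2)$ and $b\sim N(1+\eps,\sigma^2)$ in a random order, decide which one is $a$.

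Next I would bound the total variation distance between the two hypotheses. The laws are $P=N(1,\sigma^2)\otimes N(1+\eps,\sigma^2)$ and $Q=N(1+\eps,\sigma^2)\otimes N(1,\sigma^2)$ on ordered pairs. Any test on the unordered pair is in particular a test on the ordered pair, so it suffices to bound $d_{TV}(P,Q)$. These are two Gaussians in $\mathbb{R}^2$ with common covariance $\sigma^2 I_2$ and means $(1,1+\eps)$ and $(1+\eps,1)$, whose distance is $\sqrt2\,\eps$. The standard formula then gives
\[
d_{TV}(P,Q)=2\Phi\!\Big(\frac{\sqrt2\,\eps}{2\sigma}\Big)-1\le \frac{\eps}{\sqrt{\pi}\,\sigma}.
\]
Alternatively, Pinsker with $\mathrm{KL}=\eps^2/\sigma^2$ yields $d_{TV}\le \eps/(\sqrt2\sigma)$.

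Finally I would conclude as follows. For any test, under a uniform prior on the two hypotheses (or taking the worse of the two), the success probability is at most $\tfrac12+\tfrac12 d_{TV}(P,Q)\le \tfrac12+O(\eps/\sigma)$. If $\sigma\in\omega(\eps)$, this is $\tfrac12+o(1)$, so no test identifies the distribution with high probability. The only point needing care, which I expect to be the main subtlety rather than a real obstacle, is making the unordered-pair formalization precise. I would do this by observing that the unordered pair is a measurable function of the ordered pair, so data processing gives $d_{TV}$ on unordered pairs $\le d_{TV}(P,Q)$, and the bound above applies directly.
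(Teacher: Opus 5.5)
Your proof is correct, and it differs from the paper's mainly in how it treats the pair. The paper translates by $-\mathbf{e}_1$ and rescales by $1/\sigma$. It then bounds the total variation between the \emph{single-sample} laws $N(0,1)$ and $N(\eps/\sigma,1)$ by $\eps/(2\sigma)$, citing Devroye et al. It never explicitly forms the joint law of the two observations. So the step from one sample to the unordered pair is left implicit; it can be recovered from subadditivity of total variation over products, $d_{TV}(P_1\otimes P_2,\,P_2\otimes P_1)\le 2\,d_{TV}(P_1,P_2)$, at the cost of a factor $2$.

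You instead work directly with the two labelings of the ordered pair, $N(\mathbf{e}_1,\sigma^2 I_k)\otimes N((1+\eps)\mathbf{e}_1,\sigma^2 I_k)$ versus its swap. This is the same device the paper uses for Lemma~\ref{lemma:k-impossibility}. From the mean separation $\sqrt2\,\eps$ you compute the exact value $2\Phi(\eps/(\sqrt2\sigma))-1\le \eps/(\sqrt\pi\,\sigma)$. You also give a data-processing argument that justifies replacing the unordered pair by the ordered one, and the bound $\tfrac12+\tfrac12 d_{TV}$ on the success probability.

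Your route is more faithful to the statement as written and gives an explicit, tight constant. The test ``call the element with smaller first coordinate $X$'' succeeds with probability exactly $\Phi(\eps/(\sqrt2\sigma))$, matching your bound. The paper's route is shorter. One small remark: your Neyman--Pearson reduction to the first coordinate is not actually needed. The formula $2\Phi(\norm{\mu-\mu'}/(2\sigma))-1$ for two Gaussians with common covariance $\sigma^2 I$ holds in any dimension, and the mean separation is $\sqrt2\,\eps$ in $\mathbb{R}^{2k}$ as well.
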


\section{Experiments}
\label{section:experiment}

In this section, we implement and evaluate our main algorithm. First, we show that it empirically outperforms the na\"ive algorithm, suggesting that the denoising effect of the SVD is significant. Second, we demonstrate that our analysis in Corollary~\ref{corollary:sigma-limit} is tight with respect to the parameter $s_k(B)$.

\subsection{Experimental Evaluation}
\label{subsection:evaluation}
We first describe the details of our initial experiment. Our main algorithm is simple to implement. As a baseline, we compare it against a na\"ive algorithm that selects the index minimizing $\norm{Ax_j}$ over $1 \le j \le n$, where $x'_j = \mathbf{e}_{n+1} - \mathbf{e}_j$. This na\"ive method is affected by the ambient space $\mathbb{R}^d$ when determining the noise threshold $\sigma$ required for successful recovery, while our main algorithm depends only on the latent subspace dimension $k$ (specifically, $\sigma = O(k^{-1/4})$). Note that this na\"ive method does not mean the algorithm in the \citep{abdullah2014spectral}. We could not use them as a baseline because their time and space complexity made implementation infeasible for our experimental setup. 

We evaluate our algorithm on two real datasets from different domains: one image-based and one text-based. Throughout the experiments, we fix the parameters as follows: $n = 3\,000$, $k = 30$, and $\eps = 0.05$. For a given dataset and fixed noise level $\sigma$, we perform 100 queries to compute the success probability. As our algorithm is efficient and simple to implement, all experiments were conducted on a CPU with an M1 chip and 16 GB of RAM. We give the full details of our datasets.

\textbf{1) GloVe\footnote{\href{https://nlp.stanford.edu/projects/glove/}{https://nlp.stanford.edu/projects/glove/}}:} This is a set of pre-trained word embeddings where each English word is represented by a high-dimensional real vector. We use the GloVe Twitter 27B dataset, which provides $1\,193\,514$ word vectors of dimension $200$. We randomly sample $n = 3\,000$ to construct the data matrix $B$.

\textbf{2) MNIST\footnote{\href{http://yann.lecun.com/exdb/mnist/}{http://yann.lecun.com/exdb/mnist/}}:} MNIST consists of $70\,000$ images of handwritten digits ($0$-$9$) in grayscale, of which $60\,000$ are training and $10\,000$ testing. Each image is $28 \times 28$, yielding a $784$-dimensional vector of pixel intensities. We sample $n = 3\,000$ training images (flattened to $d=784$) as data points.

{\it Preprocessing.} To align the real-world datasets with our theoretical model, we first performed a rank-$k$ approximation on the data matrix $B$. We then selected queries and rescaled the data to ensure the nearest neighbor was at distance $1$ and all other points were at least $1+\eps$ away. The full preprocessing details are available in the appendix.

{\it Results.} Across both datasets, our main algorithm consistently outperforms the na\"ive algorithm across the full range of $\sigma$ values. In particular, the failure threshold—the smallest $\sigma$ at which NN recovery becomes unreliable—is significantly higher for our algorithm. We observe two characteristic noise regimes: one where the noise is too small to affect the NN, and another where recovery is information-theoretically impossible. Between these extremes lies a meaningful intermediate regime in which the performance of the two algorithms diverges. In this regime, the performance gap is more pronounced when the ambient dimension $d$ is large. The qualitative results are consistent across both image (MNIST) and text (GloVe) domains, indicating cross-domain robustness. These results illustrate the practical benefits of our approach.

\subsection{Dependence on $s_k(B)$}
\label{subsection:dependence-experiment}
We now describe the details of our second experiment. While the analyses in Sections~\ref{section:algorithmic-results} and~\ref{section:lower-bounds} characterize the algorithm’s performance in terms of the subspace dimension $k$ and the distance gap $\eps$, it is also of interest to examine its dependence on $s_k(B)$. In Corollary~\ref{corollary:sigma-limit}, we showed that our algorithm exhibits a linear dependence: $\sigma = O(s_k(B))$. Our empirical results confirm these dependencies, suggesting that the analysis is tight.

{\it Data Generation.} For the second experiment, we generated synthetic data to analyze the algorithm's dependence on $s_k(B)$. We constructed the data matrix $B$ via an SVD-based approach, allowing us to explicitly control its singular values. A subsequent procedure was used to embed a query and its nearest neighbor to satisfy the $1+\eps$ distance gap condition. A detailed description of the data generation process can be found in the appendix.

{\it Results.}
In this plot, we observe a clear linear dependence on each parameter across the full range of $\sigma$. The \emph{noise threshold} is defined as the value of $\sigma$ at which the success probability first drops below $90\%$ for a given parameter setting. For each parameter configuration, we repeat the experiment 100 times to estimate the success probability. These results demonstrate that our theoretical analysis is tight. Note that this does not imply the optimality of our algorithm with respect to $s_k(B)$.

\begin{figure}[t]
    \centering
    \begin{subfigure}[b]{0.63\textwidth}
        \centering
        \includegraphics[width=\linewidth]{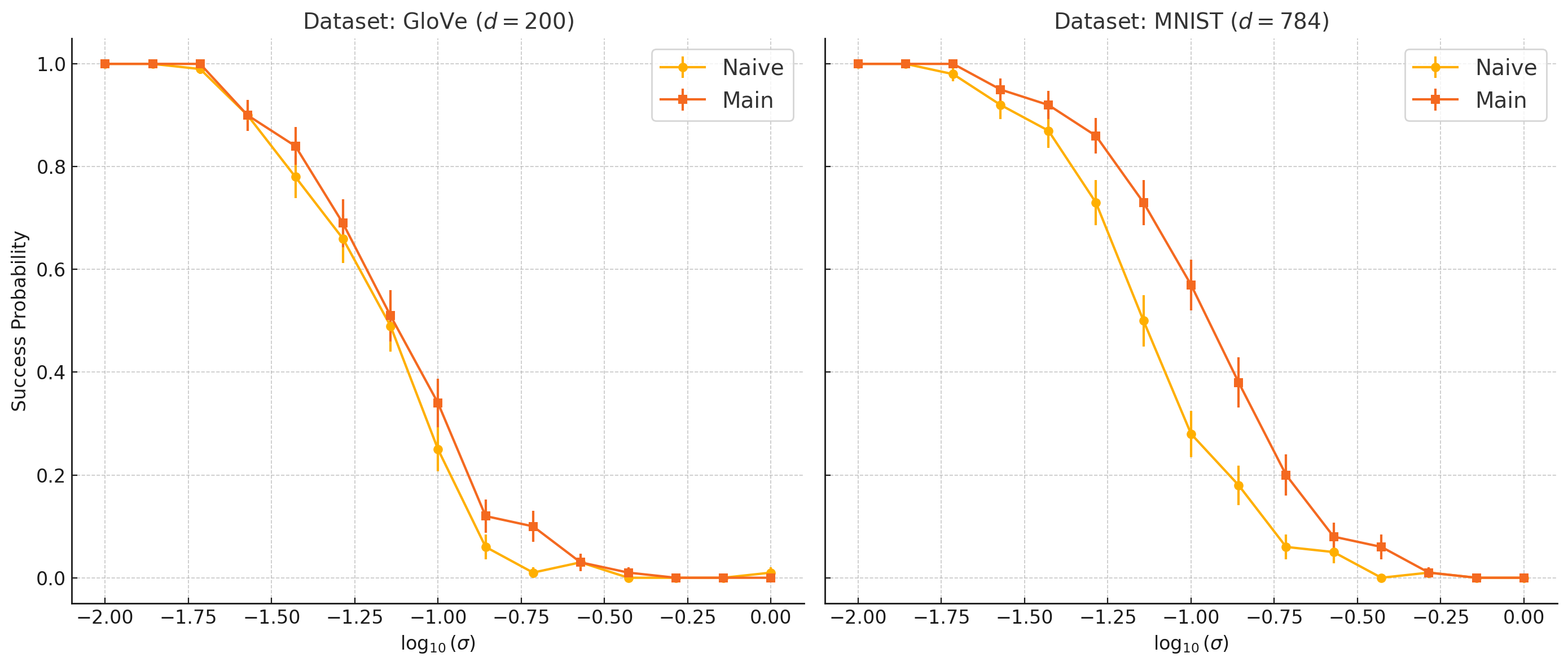}
    \end{subfigure}
    \hfill
    \begin{subfigure}[b]{0.36\textwidth}
        \centering
        \includegraphics[width=\linewidth]{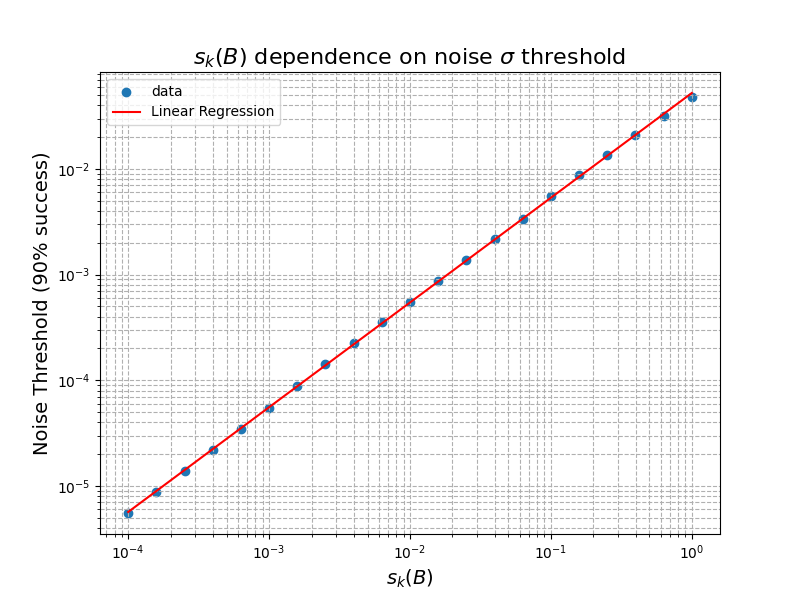}
    \end{subfigure}

    \caption{Performance comparison on two real-world datasets (left) and analysis of the noise threshold's dependence on the singular value $s_k(B)$ (right).}
    \label{fig:combined_results}
\end{figure}

\subsubsection*{Acknowledgments}
The authors were supported in part Office of Naval Research award number N000142112647,
and a Simons Investigator Award.

{
\small
\bibliography{references}

@article{johnson1984extensions,
  title={Extensions of lipshitz mapping into hilbert space},
  author={Johnson, William B and Lindenstrauss, Joram},
  journal={Contemporary Mathematics},
  volume={26},
  pages={189--206},
  year={1984}
}

@inproceedings{indyk1998approximate,
  title={Approximate nearest neighbor: towards removing the curse of dimensionality},
  author={Indyk, Piotr and Motwani, Rajeev},
  booktitle={Proceedings of the Symposium on Theory of Computing (STOC)},
  pages={604--613},
  year={1998}
}

@inproceedings{datar2004locality,
  title={Locality-sensitive hashing scheme based on p-stable distributions},
  author={Datar, Mayur and Immorlica, Nicole and Indyk, Piotr and Mirrokni, Vahab S},
  booktitle={Proceedings of the ACM Symposium on Computational Geometry (SoCG)},
  year={2004}
}

@inproceedings{andoni2006near,
  title={Near-optimal hashing algorithms for approximate nearest neighbor in high dimensions},
  author={Andoni, Alexandr and Indyk, Piotr},
  booktitle={Proceedings of the Symposium on Foundations of Computer Science (FOCS)},
  pages={459--468},
  year={2006}
}

@inproceedings{andoni2014beyond,
  title={Beyond locality-sensitive hashing},
  author={Andoni, Alexandr and Indyk, Piotr and Nguyen, Huy and Razenshteyn, Ilya},
  booktitle={Proceedings of the ACM-SIAM Symposium on Discrete Algorithms (SODA)},
  year={2014},
  eprint={1306.1547},
  archivePrefix={arXiv}
}

@article{mcnames2001fast,
  title={A fast nearest-neighbor algorithm based on a principal axis search tree},
  author={McNames, James},
  journal={Pattern Analysis and Machine Intelligence, IEEE Transactions on},
  volume={23},
  number={9},
  pages={964--976},
  year={2001}
}

@article{sproull1991refinements,
  title={Refinements to nearest-neighbor searching in k-dimensional trees},
  author={Sproull, Robert F},
  journal={Algorithmica},
  volume={6},
  pages={579--589},
  year={1991}
}

@inproceedings{verma2009spatial,
  title={Which spatial partition trees are adaptive to intrinsic dimension?},
  author={Verma, Nikhilesh and Kpotufe, Samy and Dasgupta, Sanjoy},
  booktitle={Proceedings of the Twenty-Fifth Conference on Uncertainty in Artificial Intelligence},
  pages={565--574},
  year={2009},
  organization={AUAI Press}
}

@inproceedings{weiss2008spectral,
  title={Spectral hashing},
  author={Weiss, Yair and Torralba, Antonio and Fergus, Rob},
  booktitle={Advances in neural information processing systems},
  pages={1753--1760},
  year={2008}
}

@article{salakhutdinov2009semantic,
  title={Semantic hashing},
  author={Salakhutdinov, Ruslan and Hinton, Geoffrey},
  journal={International Journal of Approximate Reasoning},
  volume={50},
  number={7},
  pages={969--978},
  year={2009}
}

@book{national2013frontiers,
  title={Frontiers in Massive Data Analysis},
  author={National Research Council},
  year={2013},
  publisher={The National Academies Press},
  note={Available from: \url{http://www.nap.edu/catalog.php?record_id=18374}}
}

@article{alon2003problems,
  title={Problems and results in extremal combinatorics i},
  author={Alon, Noga},
  journal={Discrete Mathematics},
  volume={273},
  pages={31--53},
  year={2003}
}

@article{jayram2013optimal,
  title={Optimal bounds for johnson-lindenstrauss transforms and streaming problems with subconstant error},
  author={Jayram, T S and Woodruff, David P},
  journal={ACM Transactions on Algorithms},
  volume={9},
  number={3},
  pages={26},
  year={2013},
  note={Previously in SODA'11.}
}

@article{abdullah2014spectral,
  title={Spectral approaches to nearest neighbor search},
  author={Abdullah, Amirali and Andoni, Alexandr and Kannan, Ravindran and Krauthgamer, Robert},
  journal={arXiv preprint arXiv:1408.0751},
  year={2014}
}

@article{papadimitriou2000latent,
  title={Latent semantic indexing: A probabilistic analysis},
  author={Papadimitriou, Christos H and Raghavan, Prabhakar and Tamaki, Hisao and Vempala, Santosh S},
  journal={J. Comput. Syst. Sci.},
  volume={61},
  number={2},
  pages={217--235},
  year={2000}
}

@article{orourke2018,
title = {Random perturbation of low rank matrices: Improving classical bounds},
journal = {Linear Algebra and its Applications},
volume = {540},
pages = {26-59},
year = {2018},
issn = {0024-3795},
doi = {https://doi.org/10.1016/j.laa.2017.11.014},
author = {Sean O'Rourke and Van Vu and Ke Wang},
}

@article{laurent2000,
author = {B. Laurent and P. Massart},
title = {{Adaptive estimation of a quadratic functional by model selection}},
volume = {28},
journal = {The Annals of Statistics},
number = {5},
publisher = {Institute of Mathematical Statistics},
pages = {1302 -- 1338},
year = {2000},
doi = {10.1214/aos/1015957395},
}

@inproceedings{rudelson2010,
  title={Non-asymptotic theory of random matrices: extreme singular values},
  author={Rudelson, Mark and Vershynin, Roman},
  booktitle={Proceedings of the International Congress of Mathematicians 2010 (ICM 2010) (In 4 Volumes) Vol. I: Plenary Lectures and Ceremonies Vols. II--IV: Invited Lectures},
  pages={1576--1602},
  year={2010},
  organization={World Scientific}
}

@article{devroye2023,
    title={The total variation distance between high-dimensional Gaussians with the same mean}, 
    author={Devroye, Luc and Mehrabian, Abbas and Reddad, Tommy},
    journal={arXiv preprint arXiv:1810.08693},
    year={2023},
}

@article{azar2001,
author = {Azar, Yossi and Fiat, Amos and Karlin, Anna and Mcsherry, Frank and Saia, Jared},
year = {2001},
month = {01},
pages = {},
title = {Spectral Analysis of Data},
journal = {Conference Proceedings of the Annual ACM Symposium on Theory of Computing},
doi = {10.1145/380752.380859}
}

@article{davis1970,
 ISSN = {00361429},
 URL = {http://www.jstor.org/stable/2949580},
 author = {Chandler Davis and W. M. Kahan},
 journal = {SIAM Journal on Numerical Analysis},
 number = {1},
 pages = {1--46},
 publisher = {Society for Industrial and Applied Mathematics},
 title = {The Rotation of Eigenvectors by a Perturbation. III},
 urldate = {2025-09-20},
 volume = {7},
 year = {1970}
}

@article{bai1993,
 ISSN = {00911798, 2168894X},
 URL = {http://www.jstor.org/stable/2244575},
 author = {Z. D. Bai and Y. Q. Yin},
 journal = {The Annals of Probability},
 number = {3},
 pages = {1275--1294},
 publisher = {Institute of Mathematical Statistics},
 title = {Limit of the Smallest Eigenvalue of a Large Dimensional Sample Covariance Matrix},
 urldate = {2025-09-20},
 volume = {21},
 year = {1993}
}

@article{wedin1972,
  title={Perturbation bounds in connection with singular value decomposition},
  author={Wedin, Per-{\AA}ke},
  journal={BIT Numerical Mathematics},
  volume={12},
  number={1},
  pages={99--111},
  year={1972},
  publisher={Springer}
}

@article{musco2015,
  title={Randomized block krylov methods for stronger and faster approximate singular value decomposition},
  author={Musco, Cameron and Musco, Christopher},
  journal={Advances in neural information processing systems},
  volume={28},
  year={2015}
}
}

\appendix

\section{Supplementary Material}

\subsection{Postponed Proofs}

\subsubsection{Proof of Lemma \ref{lemma:second-term}}
\begin{proof}
      $B^{(1)}$'s columns all lie in $V$. So, $s_{k+1}(B^{(1)})=0. $
       Thus, instantiating Lemma~\ref{lemma:perturb-spectral-bound} gives:
    \begin{equation}
        \norm{P_{\sh{U}}-P_V} \le \frac{2 \norm{\sh{C}}}{s_k(\sh{B})}.
        \label{eq:projection-spectral-bound}
    \end{equation}

    Note that $\sh{C}$ is a $d \times (\frac{n}{2}+1)$ matrix. Applying Lemma~\ref{lemma:gaussian-matrix-norm} with $t = \sigma \sqrt{2 \ln(8n^2)}$ gives:
    \begin{align}\label{145} \norm{P_{\sh{U}}-P_V} \le \frac{2 \norm{\sh{C}}}{s_k(\sh{B})} \le \frac{2\sigma}{s_k(\sh{B})} \left( \sqrt{d} + \sqrt{\frac{n}{2} + 1} +\sqrt{2\ln(8n^2)} \right) \le \frac{8 \sigma \sqrt{n}}{s_k(\sh{B})} \end{align}
    with at least $1-\frac{1}{4n^2}$ probability (using $n\geq d,10$). In conclusion, the lemma is proved as follows:
    \[ \Pr \left[ \norm{ (P_{\sh{U}}-P_V)\fh{B} x_j }^2 \le \frac{100\sigma^2 n}{s_k^2(\sh{B})} \norm{\fh{B} x_j}^2 \right] \ge 1-\frac{1}{4n^2} \]
    for all $1 \le j \le n/2$.
\end{proof}

\subsubsection{Proof of Lemma \ref{lemma:third-term}}
\begin{proof}
    From the definition of $\fh{C}$ and $x_j$, the vector $y := \fh{C} x_j \in \mathbb{R}^d$ is just a summation of two Gaussian random vectors. Thus, $y\sim N(0, 2\sigma^2 I_d)$. Also, for any basis $\{ u_1, \ldots, u_k \}$ of $\text{col}(\sh{U})$, since $P_{\sh{U}}$ is a projection matrix, the following holds:
    \[ \norm{ P_{\sh{U}}\fh{C} x_j }^2 = \sum_{i=1}^k \left( u_i^T y \right)^2. \]

    Note that $P_{\sh{U}}$ and $\fh{C}$ are stochastically independent (since $P_{sh{U}}$ depends only on $\sh{C}$ and not on $\fh{C}$). Since $ y\sim N(0,2\sigma^2 I_d) $ is spherically symmetric, its entries $X_i = u_i^T y$ are independent $N(0, 2\sigma^2)$ random variables. Hence
    \[ \sum_{i=1}^k (u_i^T y)^2 = \norm{X}^2 = 2\sigma^2 \cdot \sum_{i=1}^k \left( \frac{X_i}{\sqrt{2\sigma^2}} \right)^2 = 2\sigma^2 Z \]
    where $ Z\sim\chi^2_k $. Now we can apply the Laurent–Massart tail bound \citep{laurent2000} for $\chi^2$ distribution.
    \[\forall x\ge 0,  \Pr\left[ |Z - k|> 2\sqrt{kx} + 2x\right] \le e^{-x}. \]
    Setting $x=6\ln n$, using the Laurant-Massart bound and taking the union over $j=1,2,3
    \ldots n/2 $ gives us the Lemma.
\end{proof}

\subsubsection{Proof of Lemma \ref{lemma:fourth-term}}
\begin{proof}
    We have
    \begin{align}
        &\norm {B^{(1)T}x_j^T(P_{U^{(2)}}-P_V)B^{(1)}x_j}\leq 
        \norm {P_{U^{(2)}}-P_V} \norm {B^{(1)}x_j} ^2.
    \end{align}
    Applying (\ref{145}) to the right hand side, the proof completes.
\end{proof}

\subsubsection{Proof of Lemma \ref{lemma:fifth-term}}
\begin{proof}
    Let $y := x_j^T {\fh{B}}^T P_{\sh{U}}$. $y$ and $\fh{C}x_j$ are stochastically independent (Recall that $B$ is not random, but, a fixed matrix.) So, $y\fh{C}x_j$ is a real-valued Gaussian random variable ith distribution $N(0,2\norm{y}^2\sigma^2)$. Further, $\norm{y}\leq \norm{\fh{B}x_j}$ since $P_{\sh{U}}$ is a projection matrix. Now, the Lemma follows by standard Gaussian tail bounds.

    
    
\end{proof}

\subsubsection{Proof of Lemma \ref{lemma:k-impossibility}}
\begin{proof}
    We want to calculate the KL divergence between two distributions. Let $P_1 = N(0, \sigma^2 I_k)$ and $P_2 = N(0, (\sigma^2 + \frac{1}{k}) I_k)$. We evaluate this by computing the KL divergence between the joint distributions $R_1(x, y) = P_1(x)P_2(y)$ and $R_2(x, y) = P_2(x)P_1(y)$. Note that these two distributions differ only by a swap. Since the KL divergence is additive for independent distributions, the following holds:
    \[ D_{KL} (R_1 || R_2) = D_{KL} (P_1 || P_2) + D_{KL} (P_2 || P_1). \]
    
    The formula for the KL divergence between two multivariate $k$-dimensional gaussian distributions $Q_1 = N(\mu_1, \Sigma_1)$ and $Q_2 = N(\mu_2, \Sigma_2)$ is:
    \[ D_{KL}(Q_1 || Q_2) = \frac{1}{2} \left[\log\frac{|\Sigma_2|}{|\Sigma_1|} - k + \text{tr} \{ \Sigma_2^{-1}\Sigma_1 \} + (\mu_2 - \mu_1)^T \Sigma_2^{-1}(\mu_2 - \mu_1)\right]. \]

    Using the above formula, we get
    \begin{align*}
    &2D_{KL}(R_1||R_2)=k\log \frac{\sigma^2+(1/k)}{\sigma^2}-k+k\frac{\sigma^2}{\sigma^2+(1/k)} \notag\\
    &+k\log\frac{\sigma^2}{\sigma^2+(1/k)}-k+k\left( 1+(1/k\sigma^2)\right)\notag\\
    &=-2k+k\frac{1}{1+(1/k\sigma^2)}+k+\frac{1}{\sigma^2}\leq \frac{1}{k\sigma^4},        
    \end{align*}
    using $1/(1+\eta)\leq 1-\eta+\eta^2$ for $\eta\in [0,1]$.
Therefore, the KL divergence approaches $0$ when $\sigma = \omega(k^{-1/4})$. This means that it becomes impossible to determine which distribution the observed vector came from as $k$ grows large.
\end{proof}

\subsubsection{Proof of Lemma \ref{lemma:eps-impossibility}}
\begin{proof}
    Translating both distributions by $-\mathbf e_1$ and scaling by $1/\sigma$ reduces the problem to distinguishing $N(0, I_k)$ from $N((\eps / \sigma) \mathbf{e_1}, I_k)$ while preserving distinguishability. Because the coordinates are independent and identical except for the mean shift in the first coordinate, the optimal test depends only on the first coordinate. Thus, the task is equivalent to distinguishing the $1$–dimensional Gaussian distribution $N(0, 1^2)$ and $N(\eps/\sigma, 1^2)$. According to \citep{devroye2023}, the total variation distance between two one-dimensional Gaussian distributions with unit variance is at most half the difference of their means. Therefore,
    $d_{TV} \left(N(0, 1^2), N(\frac{\eps}{\sigma}, 1^2)\right) \le \frac{\eps}{2\sigma}$, 
    which implies that based on the observed vector, it becomes impossible to determine which distribution it came from as $\frac{\eps}{\sigma} \to 0$.
\end{proof}

\subsection{Postponed Details}

\subsubsection{Comparison of Noise Tolerance Regimes}
\begin{table}[h]
\centering
\label{tab:noise_regimes}
\begin{tabular}{@{}lllll@{}}
\toprule
\textbf{Noise Regime} & \textbf{Magnitude} & \textbf{Random Projection} & \textbf{\cite{abdullah2014spectral}} & \textbf{SVD (Ours)} \\ \midrule
$\sigma \ll 1/\sqrt{d}$ & $o(1)$ & \checkmark Succeeds & \checkmark Succeeds & \checkmark Succeeds \\
\midrule
$\sigma \in O(1/d^{1/4})$ & $\approx \sigma \sqrt{d}$ & \textbf{\texttimes\ Fails} & \checkmark Succeeds & \checkmark Succeeds \\
\midrule
$\sigma \in O(1/k^{1/4})$ & $\approx \sigma \sqrt{d}$ & \texttimes\ Fails & \textbf{\texttimes\ Fails} & \checkmark Succeeds \\
\midrule
$\sigma \gg 1/k^{1/4}$ & $\approx \sigma \sqrt{d}$ & \texttimes\ Fails & \texttimes\ Fails & \textbf{\texttimes\ Fails} (info-theoretic) \\ \bottomrule
\end{tabular}
\caption{Comparison of Noise Tolerance Regimes. SVD succeeds in the ``Intermediate'' regime where the noise norm is large enough to break Random Projections (RP), but structured enough to be filtered by SVD.}
\end{table}

\subsubsection{Problem Reduction Details for Section \ref{subsection:eps-lower-bound}}
The following describes the original problem and the simplified target problem:
\begin{itemize}
    \item Given $\tilde{p_1}, \ldots, \tilde{p_n}, \tilde{q}$, there is a unique index $j*$ such that $ || q - p_{j*} || \le 1$. All other points satisfy $ || q - p_j || \ge 1+\epsilon$. The goal is to output $j^*$.
    \item ($n=2$ and $q = \mathbf{0}$) Given $\tilde{p_1}, \tilde{p_2}$, there is a unique index $j*$ such that $ || p_{j*} || \le 1$. The other point satisfies $ || p_j || \ge 1+\epsilon$. The goal is to output $j*$.
    \item (Fix the data generation process) Given $\tilde{p_1}, \tilde{p_2}$, distinguish between the two cases: (i) $p_1 = \mathbf{e}_1$ and $p_2 = (1 + \epsilon) \mathbf{e}_1$, or (ii) $p_1 = (1 + \epsilon)\mathbf{e}_1$ and $p_2 = \mathbf{e}_1$.
\end{itemize}

\subsubsection{Preprocessing Details for Section \ref{subsection:evaluation}}
Our theoretical guarantees assume that the data lies entirely in a $k$-dimensional subspace, which does not strictly hold in real datasets. However, both datasets exhibit low intrinsic dimensionality, as indicated by the decay of their singular values. To align with the theoretical assumptions and highlight performance within a low-rank subspace, we apply a rank-$k$ approximation to the sampled matrix $B$ by retaining only its top $k$ singular components. This transformation preserves the essential structure of the data while making the setup consistent with our analysis.

For each dataset, we generate $100$ \emph{query points} by projecting held-out data onto the rank-$k$ subspace and selecting points whose nearest-to-second-nearest neighbor distance ratio is just above $1 + \eps$. Due to the dataset structure, all GloVe vectors are eligible as query candidates, while for MNIST, queries are sampled from the test set. After that, each column is rescaled such that the NN lies exactly at distance 1 from the query point, and i.i.d. noise $N(0, \sigma^2)$ is added. Finally, we randomly shuffle the first $n$ columns to ensure that both $\fh{B}$ and $\sh{B}$ have rank $k$, as required by Theorem~\ref{thm:main}.

\subsubsection{Data Generation Details for Section \ref{subsection:dependence-experiment}}
Unlike in the first experiment, this experiment is conducted on randomly generated datasets. The purpose here is to find concrete examples that clearly reveal the linear dependence of our algorithm on $s_k(B)$. We fix the parameters as follows: $n = 200$, $d = 100$, $k = 10$, and $\eps = 0.05$. To study the dependence on $s_k(B)$, we start with the SVD decomposition $B = X \Sigma Y^\top$, where $X$ and $Y$ are orthogonal matrices randomly generated via QR decomposition of random Gaussian matrices. The singular values of $B$ are controlled by setting the diagonal entries of $\Sigma$.

However, the resulting matrix $B = X \Sigma Y^\top$ may not automatically satisfy the distance gap condition—that is, the requirement that the NN lies at distance exactly $1$ from the query point, and the second NN is at distance at least $1 + \eps$. To enforce this condition, we proceed as follows. We first generate the first $n - 1$ columns of $B$ using the above procedure.

Then, we sample a random direction $\vec{u}$ that lies in the intrinsic $k$-dimensional subspace $V$. Essentially, we embed the query point $q = t_1 \vec{u}$ and the $n$-th point $p_n = t_2 \vec{u}$ in the space for some scalars $t_1$ and $t_2$. The value $t_1$ is set by starting from $+\infty$ and decreasing it until the distance between the nearest neighbor among the other $n-1$ points and $t_1 \vec{u}$ becomes $1+\epsilon$. Then, we set $t_2 = t_1 + 1/ || \vec{u} ||$, which makes the distance between the query point and this $n$-th point exactly $1$. This construction makes the query point, its nearest neighbor, and the origin collinear. While one could add noise to these points to avoid this artificial colinearity, we believe it would not affect the performance of the algorithms in our context.

\end{document}